\date{\today}
\newcommand{\C}{\mathbb{C}} 
\newcommand{\Z}{\mathbb{Z}} 
\newcommand{\R}{\mathbb{R}} 
\newcommand{\tr}{\mathrm{Tr}}
\newtheorem{proposition}{Proposition}[section]
\newtheorem{theorem}[proposition]{Theorem}
\newtheorem{lemma}[proposition]{Lemma}
\newcommand{\sps}[2]{\langle #1,#2 \rangle} 
\newcommand{\bx}{\mathbf{x}}
\newcommand{\id}{\mathbf{1}}
\newcommand{\norm}[1]{\left\lVert #1 \right\rVert}
\newcommand{\bsigma}{\boldsymbol{\sigma}}
\newcommand{\bgamma}{\boldsymbol{\gamma}}
\newcommand{\bk}{\mathbf{k}}
\newcommand{\bm}{\mathbf{m}}
\newcommand{\F}{\mathcal{U}}
\newcommand{\fesh}{\mathcal{F}}
\newcommand{\B}{\mathcal{B}}
\newcommand{\cb}{}
\renewcommand{\le}{\leqslant}
\renewcommand{\leq}{\leqslant}
 \renewcommand{\ge}{\geqslant}
 \renewcommand{\geq}{\geqslant}     
\title[Spectral gaps in graphene antidot lattices]{Spectral gaps in graphene antidot lattices}
\author{Jean-Marie Barbaroux}
\address{Jean-Marie Barbaroux\\
 Aix Marseille Univ, Universit\'e de Toulon\\ CNRS, CPT, Marseille, France.}
\email{barbarou@univ-tln.fr}
\author{Horia Cornean}
\address{Horia Cornean\\
Department of Mathematical Sciences, Aalborg University\\
Fredrik Bajers Vej 7G, 9220 Aalborg \O, Denmark.
}
\email{cornean@math.aau.dk}
\author{Edgardo Stockmeyer}
\address{ Edgardo Stockmeyer\\ Instituto de F\'\i sica\\
Pontificia Universidad Cat\'olica de Chile\\
Vicu\~na Mackenna 4860\\
 Santiago 7820436, Chile.}
\email{stock@fis.puc.cl}
\subjclass[2010]{Primary 81Q10; Secondary 46N50, 81Q37, 34L10, 47A10}
\keywords{Dirac operator, graphene}
\begin{document}
\begin{abstract}
  We consider the gap creation problem in an antidot graphene lattice,
{\cb  i.e. a sheet of graphene with periodically distributed obstacles.} 
We prove several spectral results concerning {\cb the size of the gap} and its
  dependence on different natural parameters related to the antidot
  lattice.
\end{abstract}

\maketitle
\section{Introduction}

{\cb Graphene, a two-dimensional material made of carbon atoms
  arranged in a honeycomb structure, has risen a lot of attention due
  to its many unique properties. Remarkably, charge carriers close to
  the Fermi energy behave as massless Dirac fermions. This is due to
  its energy band structure which exhibits two bands crossing at the
  Fermi level making graphene a gapless semimetal
  \cite{castro2009electronic}.  Many efforts have been carried out for
  the possibility of tuning an energy gap in graphene \cite{dvorak2013bandgap}. 




 }

The main physical motivation of our work is related to the so-called
antidot graphene {\cb lattice \cite{pedersen2008graphene}}, which
consists of a regular sheet of graphene having a periodic array of
{\cb obstacles} well separated from each other. {\cb These obstacles
  can be thought, for instance, as actual holes in the graphene layer
  \cite{pedersen2008graphene}.  More generally, substrate induced
  obstacles \cite{science} or those
 created by doping or by mechanical defects have also been
  considered in the literature (see e.g.,\cite{dvorak2013bandgap} and
  references therein)}. It has been observed both experimentally and
numerically that such an array causes a band gap to open up around the
Fermi level, turning graphene from a semimetal into a 
{\cb gapped} semiconductor
{\cb (see e.g.,\cite{BPP}, \cite{science}, and \cite{dvorak2013bandgap})}.

In \cite{FPFMBPJ} {\cb and \cite{brun2014electronic} } there are given
several proposals concerning the modelling of this phenomenon. In one
of these proposals the authors replace the usual tight-binding lattice
model by a two-dimensional massless Dirac operator, while a hole is
modeled with the help of a periodic mass term.  {\cb For a large mass
  term held fixed} the authors numerically analyse how the gap
appearing near the zero energy depends on {\cb the natural parameters
  of the antidot lattice, namely,} the {\cb area occupied} by one
mass-insertion versus the area of the super unit-cell which contains
only one such hole. {\cb For holes with armchair type of boundaries,
  this model is in very good agreement with tight-binding and density
  functional ab-initio calculations \cite{brun2014electronic} (see
  also \cite{BPP, PGMP, PP})}. Moreover, the Dirac operator with a
mass term varying in a superlattice has also been used to explain the
gap appearing when  a layer of graphene is placed on substrate of
hexagonal boron nitride \cite{science} (see also
\cite{song2013electron} for the inclusion of electronic interaction).

{\cb In this article we consider the Dirac model with a periodic mass
  term and we estimate the size of the energy gap in terms of the
  strength and shape of the mass-insertion together with the natural
  parameters of the antidot lattice}. Let us now
formulate {\cb more precisely } the problem we want to investigate.

\subsection{Setting and main results}$\ $
{\cb Let 
 $\chi:\R^2\to\R$ be a bounded function supported on a compact subset $S$ included in 
$\Omega:=(-1/2,1/2]^2$ satisfying
\begin{align}
  \label{eq:chi}
\int_\Omega \chi(\bx) d\bx:=\Phi\geq 0\quad \mbox{and}\quad
\int_\Omega |\chi(\bx)|^2 d\bx=1.
\end{align}
We use the standard notation for the Pauli matrices 
\begin{align*}
\sigma_1
=\left(
\begin{array}{cc}
 0&1\\
 1&0
\end{array}
 \right),\quad
\sigma_2=\left(
\begin{array}{cc}
0&-i\\
i&0
\end{array}
\right),\quad 
\sigma_3
=\left(
\begin{array}{cc}
 1&0\\
 0&-1
\end{array}
 \right).
\end{align*}
We define the  massless free  Dirac 
operator in $L^2(\R^2,\C^2)$ as
\begin{equation}\label{hc3}
  H_0:=\frac{1}{i} {\boldsymbol \sigma}\cdot \nabla=\left ( \begin{array}{ll}
                                                              \qquad 0 & -i\partial_1-\partial_2\\
                                                              -i\partial_1+\partial_2 & \qquad  0 \end{array} \right ).
        \end{equation}
Let $\alpha\in(0,1]$ be a dimensionless parameter and let $L>0$ have
dimensions of length. In physical units,
the operator describing a mass-periodically perturbed graphene sheet is
given by 
\begin{align}\label{Htilde}
  \widetilde{H}(\alpha,\mu,L)=\hbar v_{\rm F} H_0+
\sigma_3\mu \sum_{\gamma\in \mathbb{Z}^2} \chi\Big(\frac{{\bf x}
  -\gamma L}{\alpha L}\Big),
\end{align}
where $\hbar$ is Plank's constant divided by $2\pi$ and $v_{\rm F}$ is
the Fermi velocity in graphene. Here $\mu\ge 0$ has dimensions of
energy and represents the
strength of the mass-insertion which is
$L\mathbb{Z}^2$-periodic. We note that in order for the continuum Dirac model
to hold one needs $L$ to be much larger than the distance between
the carbon atoms constituting graphene.  

As it is well known, the spectrum of $H_0$ covers the whole real
line. Our main interest in this paper is finding sufficient
conditions that the function $\chi$ must satisfy in order to create a
gap around zero in the spectrum of $\widetilde{H}$ and to estimate its
size in terms of $\alpha, \mu$, and $L$. By making the scaling transformation $\bx\mapsto L\bx$ one gets that
\begin{align}\label{relation}
   \widetilde{H}(\alpha,\mu,L)=\frac{\hbar v_{\rm F}}{L} H\Big(\alpha,
  \frac{\mu L}{\hbar v_{\rm F}}\Big).
\end{align}
Note that $ \frac{\mu L}{\hbar v_{\rm F}}$ is a dimensionless parameter.
Here, for $\beta>0$, we define the operator in  $L^2(\R^2,\C^2)$
\begin{align}\label{mop}
 H(\alpha, \, \beta) := H_0 +  \sigma_3\beta\sum_{\gamma\in\Z^2} \chi\left(\frac{\bx-\bgamma}{\alpha}\right).
\end{align}
This new operator is clearly periodic with respect to $\mathbb{Z}^2$
and (just like $\widetilde{H}$) it is self-adjoint on the first
Sobolev space $H^1(\R^2,\C^2)$ (see \cite{T}).

 Given a self-adjoint operator $T$, we denote by
$\rho(T)$ its resolvent set. Here is the first main result of our
paper.}
\begin{theorem}\label{thm:main}
Assume $\Phi\neq 0$. Then there exist two constants $C>0$ and $\delta\in (0,1)$ such that for all 
$\alpha\in(0,1/2]$ and $\beta>0$ obeying $\alpha\beta<\delta$ we have
 $$
 \left [-\alpha^2\beta(\Phi-C\alpha\beta),\,
 \alpha^2\beta(\Phi-C\alpha\beta) \right ]\subset\rho(H(\alpha,\beta)).
 $$
\end{theorem}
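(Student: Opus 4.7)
My plan is to combine a Floquet-Bloch decomposition with a Feshbach-Schur reduction on each fiber. Since $H(\alpha,\beta)$ is $\Z^2$-periodic, one has
$$\sigma(H(\alpha,\beta))=\bigcup_{k\in\Omega^*}\sigma(H_k),\qquad \Omega^*:=(-\pi,\pi]^2,$$
where after the standard Bloch gauge transformation the fiber operator is
$$H_k = \sigma\cdot(-i\nabla+k)+\sigma_3 V_\alpha,\qquad V_\alpha(\bx):=\beta\chi(\bx/\alpha),$$
on $L^2(\Omega,\C^2)$ with periodic boundary conditions. It is enough to prove that for every $k\in\Omega^*$ and every $z\in[-g,g]$, with $g:=\alpha^2\beta(\Phi-C\alpha\beta)$, the fiber $H_k-z$ is invertible.

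I would work with the orthogonal projection $P$ onto the two-dimensional subspace of constant $\C^2$-valued functions in $L^2(\Omega,\C^2)$ (the $n=0$ Fourier mode). A direct computation gives $PH_kP=\sigma\cdot k+\alpha^2\beta\Phi\,\sigma_3$ (using $\int_\Omega V_\alpha\,d\bx=\alpha^2\beta\Phi$), whose eigenvalues $\pm\sqrt{|k|^2+\alpha^4\beta^2\Phi^2}$ lie at distance at least $\alpha^2\beta\Phi$ from zero. The off-diagonal block acts by $PH_kP^\perp\psi=\sigma_3\int_\Omega W_\alpha\psi\,d\bx$ with $W_\alpha:=V_\alpha-\langle V_\alpha\rangle$, so Cauchy-Schwarz combined with $\|V_\alpha\|_{L^2}=\alpha\beta$ (from $\int_\Omega|\chi|^2=1$) yields $\|PH_kP^\perp\|\le\alpha\beta$. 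On the complement, $|H_0(k)|\ge\pi$ because $\min_{k\in\Omega^*}\min_{n\in\Z^2\setminus\{0\}}|2\pi n+k|=\pi$.

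The Feshbach-Schur reduction then identifies invertibility of $H_k-z$ with invertibility of $P^\perp(H_k-z)P^\perp$ on $P^\perp L^2$, together with invertibility on $\C^2$ of the effective Hamiltonian $PH_kP-z-X(z)$, where
$$X(z):=PH_kP^\perp\bigl(P^\perp H_kP^\perp-z\bigr)^{-1}P^\perp H_kP.$$
The first point is delicate because under $\alpha\beta<\delta$ the multiplier $V_\alpha$ may satisfy $\|V_\alpha\|_\infty=\beta\|\chi\|_\infty\sim 1/\alpha$. I would treat $\sigma_3V_\alpha$ as an $H_0(k)$-relatively bounded perturbation on $P^\perp L^2$, exploiting the small support of $V_\alpha$ (of area $\alpha^2|S|$) together with the two-dimensional Sobolev embedding $H^1\hookrightarrow L^p$ ($p<\infty$) to obtain a relative bound vanishing as $\alpha\beta\to 0$, hence a uniform bound on $\|(P^\perp H_kP^\perp-z)^{-1}\|$ for $|z|\le g$.

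The hard part will be to prove the refined bound $\|X(z)\|\le C\alpha^3\beta^2$. The naive estimate $\|X(z)\|\le\|PH_kP^\perp\|^2\|(P^\perp H_kP^\perp-z)^{-1}\|\le C\alpha^2\beta^2$ is insufficient, as it yields only the weaker gap $\alpha^2\beta(\Phi-C\beta)$, which becomes uninformative in the regime $\beta\gtrsim 1$. The missing factor of $\alpha$ must come from a cancellation in the matrix elements of $X(z)$. Using $\{\sigma_3,\sigma_j\}=0$ for $j=1,2$, $X(z)$ can be recast as
$$X(z)c = \sigma_3\int_\Omega W_\alpha(\bx)\bigl[(P^\perp H_kP^\perp-z)^{-1}(W_\alpha\sigma_3 c)\bigr](\bx)\,d\bx;$$
expanding the inner resolvent about $(P^\perp H_0(k)P^\perp)^{-1}$ produces in Fourier space the sum $-\sum_{n\neq 0}|\hat V_\alpha(n)|^2\sigma\cdot(2\pi n+k)/|2\pi n+k|^2$, whose $k$-independent part vanishes by the symmetry $n\leftrightarrow -n$. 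The surviving contributions are either proportional to $|k|+|z|$ or carry additional Fourier coefficients $\hat V_\alpha(m)=\alpha^2\beta\,\hat\chi(\alpha m)$, each producing an extra power of $\alpha$. Carrying out this bookkeeping uniformly in $k\in\Omega^*$ and $z\in[-g,g]$ is the central technical step.
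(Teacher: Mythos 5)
Your architecture coincides with the paper's: Bloch--Floquet fibering, Feshbach--Schur reduction onto the zero Fourier mode, the computation $PH_kP=\sigma\cdot k+\alpha^2\beta\Phi\sigma_3$, the $\pi$ lower bound on the free operator restricted to $P^\perp$, and the correct diagnosis that the naive bound $\|X(z)\|\le C\alpha^2\beta^2$ only yields the gap $\alpha^2\beta(\Phi-C\beta)$, so that an extra factor of $\alpha$ is indispensable. The gap in your proposal is the mechanism you propose for that extra factor. You attribute it to a parity cancellation $n\leftrightarrow -n$ in Fourier space. That cancellation is real for the \emph{leading} term (the one with the free resolvent $(P^\perp H_0(k)P^\perp)^{-1}$, at $k=0$, $z=0$), but it does not control the full operator $X(z)$: to reach the free resolvent you must expand $(P^\perp H_kP^\perp-z)^{-1}$ in powers of $\sigma_3V_\alpha$, and each correction term carries a factor of $V_\alpha$ whose only a priori bounds are $\|V_\alpha\|_\infty=\beta\|\chi\|_\infty$ and $\|V_\alpha\|_2=\alpha\beta$. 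In the regime allowed by the theorem ($\alpha\beta<\delta$ with $\beta$ arbitrarily large), the first-order correction $\beta^3P\chi_\alpha R_0\,\chi_\alpha\sigma_3\,R_0\chi_\alpha P$ is, by the naive estimate, only $O(\alpha^2\beta^3)=O(\alpha^3\beta^2)\cdot(\beta/\alpha)$, which is not $O(\alpha^3\beta^2)$ and in fact need not even be small. The same difficulty is what makes your "first point" (invertibility of $P^\perp(H_k-z)P^\perp$) delicate, and the Sobolev-embedding remark you offer there is not enough as stated, because relative boundedness with respect to $H_0(k)$ does not by itself produce a bound that is uniform as $\beta\to\infty$ with $\alpha\beta$ fixed.

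The paper resolves both issues at once with a position-space locality estimate rather than a cancellation. One factorizes $\chi_\alpha=\big(\mathrm{sgn}(\chi_\alpha)|\chi_\alpha|^{1/2}\big)\,|\chi_\alpha|^{1/2}$ and proves, using the exponentially decaying kernel $|(H_0\pm i)^{-1}(\bx,\bx')|\le Ce^{-|\bx-\bx'|}/|\bx-\bx'|$ together with the Hardy--Littlewood--Sobolev inequality on the support of $\chi_\alpha$ (of area $O(\alpha^2)$), the key bounds $\||\chi_\alpha|^{1/2}(h^{(0)}_\bk\pm i)^{-1}|\chi_\alpha|^{1/2}\|\le C\alpha$ and $\|\sqrt{|\chi_\alpha|}P_0\|\le\alpha$. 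The Birman--Schwinger operator $WR_0(z)U$ (with $W,U\sim\sqrt\beta\,|\chi_\alpha|^{1/2}$) then has norm $\le C\alpha\beta<1$, which (i) gives invertibility of $Q_0(h_\bk-z)Q_0$ for all $|z|\le\pi/2$ by a convergent Neumann series, and (ii) yields $\|X(z)\|\le\beta\,\|P_0|\chi_\alpha|^{1/2}\|\cdot\|WR_0(z)U\|\cdot\||\chi_\alpha|^{1/2}P_0\|\le C\alpha^3\beta^2$ directly, with no need for any symmetry argument or case analysis in $k$. If you want to salvage your Fourier-space route, you would still need this small-support gain (or an equivalent) to tame the higher-order terms of the resolvent expansion, at which point the cancellation becomes superfluous.
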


\vspace{0.5cm}

{\cb \noindent {\bf Remark}.  Let us comment on the consequences of
  this result regarding the energy gap, $E_{\rm g}$, for the family
  $\widetilde{H}$ defined in \eqref{Htilde}.  Let us define the area
  of the supercell $A_t:=L^2$ and $A_r:=\alpha^2 L^2$ representing the
  area supporting one mass perturbation.  In view of \eqref{relation},
  Theorem \ref{thm:main} states that for
  $\alpha \frac{\mu L}{\hbar v_{\rm F}} $ small enough and for $\Phi$
  in \eqref{eq:chi} positive
\begin{align}
    \label{eq:3.r}
    E_{\rm g}\gtrsim\, \mu \Phi\, \alpha^2 = \mu \Phi\, \frac{A_r}{A_t}.
  \end{align}
Remarkably this estimate does not depend on the side $L$ of the
supercell. This is to be contrasted with the regime of $\mu\to\infty$
considered in \cite{pedersen2008graphene} and \cite{brun2014electronic} where
it was found that, for $\alpha$ small enough (see e.g., Equation A.8
of \cite{brun2014electronic}), 
\begin{align}
  \label{eq:4.r}
  E_{\rm g}\sim \frac{\hbar v_{\rm F}}{L} \sqrt{\frac{A_r}{A_t}}.
\end{align}
}
This latter regime can be mathematically  investigated using the Dirac operator with
infinite mass boundary conditions proposed in \cite{Berry} (see
\cite{Benguria2017} for its rigorous definition).


 \vspace{0.5cm}

In the case $\Phi = 0$ it is still possible to prove the existence of a gap opening at zero. The next result needs some assumptions on $\chi$ in terms of its Fourier coefficients
\begin{equation}\label{hc1}
 \hat\chi(\bm) = \int_\Omega e^{-2 i \pi \bx\cdot \bm} \chi(\bx) d \bx , \quad\bm\in\Z^2.
\end{equation}
Note that $\Phi=0$ means that $\hat\chi(0,0)=0$. {\cb In terms of the
operator $H$ in \eqref{mop}, this time we keep $\alpha=1$ and we make
$\beta $ smaller than certain constant times the $L^\infty$-norm of
$\chi$.  Then the gap can still survive but it scales with $\beta^3$
instead of $\beta$. As a consequence, the gap for $\widetilde{H}$ has
the following behaviour, for $\frac{\mu L}{\hbar v_{\rm F}}$ small
enough,
\begin{align}
  \label{eq:3.j}
  E_{\rm g}\gtrsim \,\mu \left(\frac{\mu L}{\hbar v_{\rm F}}\right)^2.
\end{align}
}
\begin{theorem}\label{thm:main2}
Assume $\Phi=\hat\chi(0,0)=0$, and at the same time:
\begin{equation}\label{hyp-1}
 \sum_{\bm\neq 0}\sum_{\bm'\neq 0} \frac{\bm\cdot\bm'}{|\bm|^2|\bm'|^2} \, \overline{\hat\chi(\bm)}\hat\chi(\bm') 
 \hat\chi(\bm-\bm') \neq 0 .
\end{equation}
Then there exist two positive numerical constants $\beta_0$ and $C$ such that
for every $0<\beta<\beta_0/{\cb \|\chi\|_\infty}$ we have
$$
 \left [- C \beta^3 , \, C \beta^3 \right ] \subset \rho(H(1, \beta)).
$$
\end{theorem}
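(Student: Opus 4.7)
The plan is to combine a Bloch-Floquet decomposition with a Feshbach-Schur reduction around the Dirac point $\bk=0$. Since $H(1,\beta)$ commutes with $\Z^2$-translations, it is unitarily equivalent to the direct integral $\int^{\oplus}H(\bk,\beta)\,d\bk$ over the Brillouin zone $(-\pi,\pi]^2$, where
\[
H(\bk,\beta)=\bsigma\cdot(-i\nabla+\bk)+\beta\sigma_3\chi
\]
acts on $L^2(\Omega,\C^2)$ with periodic boundary conditions and has compact resolvent. The gap claim reduces to showing $\mathrm{spec}(H(\bk,\beta))\cap[-C\beta^3,C\beta^3]=\emptyset$ for every $\bk$ in the Brillouin zone. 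For $|\bk|\geq 2\beta\|\chi\|_\infty$ this is immediate: the spectrum of $H_0(\bk)$ on the torus is $\{\pm|\bk+2\pi\bm|:\bm\in\Z^2\}$, hence at distance $|\bk|$ from zero, and $\|\beta\sigma_3\chi\|\leq\beta\|\chi\|_\infty$ gives a gap of order $\beta\|\chi\|_\infty\gg\beta^3$ by min-max.

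For small $|\bk|$ one must resolve the two-fold degeneracy of the zero eigenvalue of $H_0$ on the torus. Let $P$ be the orthogonal projection onto the two-dimensional space of constant spinors (the kernel of $H_0$ at $\bk=0$) and $Q=I-P$. Because $\Phi=0$ one has $PVP=0$ for $V:=\sigma_3\chi$, so the eigenvalue problem for $H(\bk,\beta)$ near zero reduces via Feshbach-Schur to the $2\times 2$ equation $F_\bk(z)u=zu$ with
\[
F_\bk(z)=\bsigma\cdot\bk-\beta^2 PVQ\bigl(QH(\bk,\beta)Q-z\bigr)^{-1}QVP.
\]
Since $\|(QH_0(\bk)Q)^{-1}\|\leq 1/\pi$ uniformly on the Brillouin zone, expanding the resolvent as a Neumann series in the Fourier basis $\{e^{2\pi i\bm\cdot\bx}\}_{\bm\in\Z^2}$ yields
\[
F_\bk(z)=\bsigma\cdot\bk+\beta^2 T^{(1)}_\bk(z)+\beta^3 T^{(2)}_\bk(z)+O(\beta^4),
\]
with uniform remainders as long as $\beta\|\chi\|_\infty$ is small. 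A direct computation using $\sigma_3(\bsigma\cdot\bm)\sigma_3=-\bsigma\cdot\bm$ and $\hat\chi(0)=0$ gives
\[
T^{(1)}_0(0)=\sum_{\bm\neq 0}\frac{|\hat\chi(\bm)|^2}{2\pi|\bm|^2}\,\bsigma\cdot\bm,
\]
which vanishes by $\bm\mapsto-\bm$ symmetry; one also checks $T^{(1)}_\bk(z)\in\mathrm{span}\{\sigma_1,\sigma_2\}$ and $T^{(1)}_\bk(z)=O(|\bk|+|z|)$.

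The third-order term is thus the leading contribution at $\bk=0$. Using the 2D identity $(\bsigma\cdot\bm)(\bsigma\cdot\bm')=(\bm\cdot\bm')I+i(m_1m_2'-m_2m_1')\sigma_3$, the analogous Fourier bookkeeping yields
\[
T^{(2)}_0(0)=-\tfrac{1}{(2\pi)^2}\sum_{\bm,\bm'\neq 0}\frac{\hat\chi(\bm')\hat\chi(\bm-\bm')\overline{\hat\chi(\bm)}}{|\bm|^2|\bm'|^2}\Bigl[(\bm\cdot\bm')\sigma_3+i(m_1m_2'-m_2m_1')I\Bigr].
\]
Since each coefficient of the Taylor expansion of the self-adjoint map $F_0(0)$ is self-adjoint, the coefficient of $I$ above must vanish (equivalently, a change of summation variables together with $\overline{\hat\chi(\bn)}=\hat\chi(-\bn)$ kills the antisymmetric contribution). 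Therefore $T^{(2)}_0(0)=C_\chi\sigma_3$ with $C_\chi=-(2\pi)^{-2}S$, where $S$ is the sum in \eqref{hyp-1}, and $C_\chi\neq 0$ by hypothesis. One also verifies $T^{(2)}_\bk(z)-C_\chi\sigma_3=O(|\bk|+|z|)$.

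Putting everything together, for small $|\bk|$ and $|z|$ the matrix $F_\bk(z)$ takes the form $v_1(\bk,\beta)\sigma_1+v_2(\bk,\beta)\sigma_2+\beta^3 C_\chi\sigma_3+R(\bk,z,\beta)$, with $v_1,v_2$ linear in $\bk$ (coming from $\bsigma\cdot\bk+\beta^2 T^{(1)}_\bk(0)$) and $\|R\|\leq c(\beta^3|\bk|+\beta^2|z|+\beta^4)$. The leading matrix has eigenvalues $\pm\sqrt{v_1^2+v_2^2+\beta^6 C_\chi^2}$, both of absolute value at least $\beta^3|C_\chi|$, so for $|z|<\beta^3|C_\chi|/4$ and $\beta$ sufficiently small, $F_\bk(z)-zI$ is invertible on $\C^2$ and $z$ cannot be an eigenvalue of $H(\bk,\beta)$. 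The main obstacle is the third-order algebraic computation: extracting the precise Pauli-matrix structure of $T^{(2)}_0(0)$ and identifying its $\sigma_3$-coefficient with the triple sum in \eqref{hyp-1}. This is what singles out \eqref{hyp-1} as the natural nondegeneracy condition and explains why the gap scales as $\beta^3$ rather than the $\beta^2$ one would generically expect in the $\Phi\neq 0$ case of Theorem \ref{thm:main}.
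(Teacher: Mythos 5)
Your strategy is essentially the paper's: Bloch--Floquet, Feshbach--Schur reduction onto the zero mode, exploit $\Phi=0$ and a symmetry to kill the $\beta^2$-term, and identify the $\beta^3$-coefficient of $\sigma_3$ in the effective $2\times 2$ matrix with the triple Fourier sum in \eqref{hyp-1}. The $2\times2$ algebra you perform (the identity $(\bsigma\cdot\bm)(\bsigma\cdot\bm')=(\bm\cdot\bm')I+i(m_1m_2'-m_2m_1')\sigma_3$, the sign flips from $\sigma_3(\bsigma\cdot\bm)\sigma_3=-\bsigma\cdot\bm$) reproduces exactly the paper's computation of $w(0)=0$ and $w'(0)$, rewritten via the Pauli eigenvalue formula rather than the trace bound $\|M_\bk^{-1}\|\leq |\tr(\sigma_3 M_\bk)|^{-1}$; these are equivalent for a self-adjoint traceless $2\times 2$ matrix. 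Your large-$\bk$ argument (min--max with $|\bk|\geq 2\beta\|\chi\|_\infty$) is a more elementary alternative to the paper's Feshbach/Neumann-series inversion on $|\bk|>2\beta^2/\pi^2$, and is fine since $\|\chi\|_\infty\geq\|\chi\|_2=1$.

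Two points deserve tightening. First, the claim that the $I$-coefficient of $T^{(2)}_0(0)$ ``must vanish by self-adjointness'' is false as stated: a real multiple of $I$ is self-adjoint, so self-adjointness alone does not exclude an $I$-term. The parenthetical change-of-variables argument is the one that actually works (under $\bm\mapsto-\bm'$, $\bm'\mapsto-\bm$ the Fourier-coefficient product is invariant while $m_1m_2'-m_2m_1'$ flips sign); the paper instead uses the charge-conjugation operator $U_c=\sigma_1\circ\overline{(\,\cdot\,)}$ to prove tracelessness of the whole matrix $m_\bk(\beta)$ in one stroke, which is cleaner. Second, because you expand in $\beta$ at fixed $\bk$ with $|\bk|\lesssim\beta$, you genuinely need the structural claim that $T^{(1)}_\bk$ contributes nothing to the $\sigma_3$-coefficient (otherwise $\beta^2\cdot O(|\bk|)=O(\beta^3)$ could spoil the leading term). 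The claim $T^{(1)}_\bk(z)\in\mathrm{span}\{\sigma_1,\sigma_2\}$ is true only at $z=0$: for $z\neq 0$ the fiber resolvent restricted to $P_\bm$ is $\big(\bsigma\cdot(2\pi\bm+\bk)+zI\big)/\big(|2\pi\bm+\bk|^2-z^2\big)$ and the $zI$ part, after conjugation by $\sigma_3$, produces an $I$-term in $T^{(1)}_\bk(z)$. Your argument is not broken by this, but it should be phrased as ``analyze $F_\bk(0)$ first and propagate to $|z|\lesssim\beta^3$ by a perturbation argument'' (precisely the content of the paper's Lemma~\ref{lemma3.1}), rather than asserting the $\sigma_1,\sigma_2$-structure of $T^{(1)}_\bk(z)$ for all small $z$. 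The paper sidesteps both issues by shrinking the window to $|\bk|\lesssim\beta^2$, where the $\bk$-dependent part of the Feshbach correction is $O(\beta^4)$ and no structural knowledge of its Pauli decomposition is required.
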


\vspace{0.5cm}

Let us describe a particular class of potentials where assumption \eqref{hyp-1} holds true. Assume that $\chi(\bx)$ is of the form
$$\chi(\bx)=\sum_{N-10\leq |\bm|\leq N+10} 2\cos(2\pi \bm \cdot \bx),\quad N\gg 1.$$
By construction, all the Fourier coefficients
$\hat{\chi}(\bm)=\hat{\chi}(-\bm)$ equal either $1$ or $0$. The
non-zero coefficients are those for which $\bm$ lies in an annulus
with outer radius $N+10$ and inner radius $N-10$. When $N$ becomes
large enough, the triples of vectors $\bm$, $\bm'$ and $\bm-\bm'$ for
which the Fourier coefficients in \eqref{hyp-1} are simultaneously
non-zero form a triangle which ``almost'' coincides with an
equilateral triangle with side-length equal to $N$. Here ``almost''
means that the angle between $\bm$ and $\bm'$ is close to $\pi/3$ when
$N$ is large enough. Thus the scalar product $\bm\cdot\bm'$ is
positive whenever the Fourier coefficients are non-zero (provided $N$
is large enough) and the double sum in \eqref{hyp-1} is also
positive. 

In the rest of the paper we give the proof of the two theorems listed above.


%


\section{Proof of Theorem~\ref{thm:main}}

Throughout this work we use the notation
$$ \|f\|_p=\Big(\int_\Omega|f(\bx)|^p d\bx \Big)^{1/p},\quad p\in[1,\infty).  $$ 
Note that our conditions on $\chi$ imply: 
\begin{align}\label{chi-norm}
 0< \Phi\le\|\chi\|_1\le \|\chi\|_2=1.
\end{align}

\subsection{Bloch-Floquet representation and proof of Theorem~\ref{thm:main}} $\ $

In this subsection we start by presenting the main strategy of the proof
of Theorem~\ref{thm:main}. It consists of a
suitable application of the Feshbach inversion formula to the Bloch-Floquet fiber of $H(\alpha,\beta)$. The main technical
ingredients are Lemmas \ref{lem2.2} and \ref{lem2.3} whose proof can
be found in the next subsection. At the end of this subsection we present
the proof of Theorem~\ref{thm:main}.

Let $ \mathscr{S}(\R^2,\C)$ denote the Schwartz space of test functions.  Consider the map 
\begin{align*}
&\tilde{\F}: \mathscr{S}(\R^2,\C) \subset L^2(\R^2,\C)\longrightarrow 
L^2(\Omega, L^2(\Omega,\C)), \\
&(\tilde{\F}\Psi)(\bx;\bk):=\sum_{\bgamma\in \Z^2}e^{2\pi i
  \bk\cdot(\bx+\bgamma)}\;\Psi(\bx+\bgamma), \quad  \bx,\bk\in\Omega.
\end{align*}
It is well known (see \cite{RS}) that $\tilde{\F}$ is an isometry in $ L^2(\R^2,\C)$
that can be extended to a unitary operator. We denote its unitary extension by the same
symbol. We define the Bloch-Floquet transform as
$\F:=\tilde{\F}\otimes \id_{\C^2}$.  Then we have that 
\begin{align*}
\F H(\alpha,\beta)\F^*=\int_{\Omega}^\oplus h_\bk(\alpha,\beta)
d\bk,\quad h_\bk(\alpha,\beta):= (-i\nabla_\bx^{\rm PBC}-2\pi \bk)\cdot\bsigma+\beta\chi_\alpha(\bx)\sigma_3,
  \end{align*}
  where each fiber Hamiltonian $h_\bk(\alpha,\beta)$ is defined in
  $L^2(\Omega, \C^2)$. 
Here $\nabla_\bx^{\rm PBC}$ is the gradient operator with periodic boundary
  conditions and {\cb $$\chi_\alpha(\bx):=\chi(\bx/\alpha).$$} The
  spectra of $H(\alpha,\beta)$ and $h_\bk(\alpha,\beta)$ are related through
  \begin{align}\label{spectra}
    \sigma(H(\alpha,\beta))=\overline{\bigcup_{\bk\in\Omega}\sigma(h_\bk(\alpha,\beta))}.
  \end{align}
We will use the standard  eigenbasis of $\nabla_\bx^{\rm PBC}$ given by 
\begin{align*}
  \psi_{\bm}(\bx):=e^{2\pi i \bm\cdot\bx}, \quad \bm\in \Z^2, \bx\in \Omega,
\end{align*}
which is periodic and satisfies
\begin{align*}
-i  \nabla_\bx^{\rm PBC}  \psi_{\bm}=2\pi\bm   \psi_{\bm}.
\end{align*}
For $\bm\in \Z^2$ define the projections
\begin{align}\label{eq:def-Pm}
  P_\bm:=  | \psi_{{\bm}}\rangle \langle \psi_{{\bm}}|\otimes
  \id_{\C^2}\quad \mbox{and}\quad Q_0:={\rm Id}-P_0.
\end{align}
\begin{lemma}\label{lem2.1}
Let $\alpha\in (0,1)$ and $\beta>0$. Then, for every {\cb
  $\bk\in\Omega$ and  $\psi\in P_0 L^2(\Omega,\C^2)$, we have that}
\begin{align*}
  \norm{P_0h_\bk(\alpha,\beta)P_0\psi}\ge 
\beta\alpha^2\Phi\norm{\psi}.
\end{align*}
\end{lemma}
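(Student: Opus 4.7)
The plan is to exploit the fact that $P_{0}L^{2}(\Omega,\C^{2})$ is only two-dimensional (it consists of constant functions with values in $\C^{2}$), so that $P_{0}h_{\bk}(\alpha,\beta)P_{0}$, restricted to this range, reduces to a $2\times 2$ matrix whose lower bound on the norm can be read off from the fact that it is (up to a factor) of the form $\mathbf{v}\cdot\bsigma$.

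First I would write an arbitrary $\psi\in P_{0}L^{2}(\Omega,\C^{2})$ as $\psi=c\,\psi_{0}$ with $c\in\C^{2}$ and $\psi_{0}\equiv 1$, so that $\|\psi\|=\|c\|_{\C^{2}}$. Since $-i\nabla_\bx^{\rm PBC}\psi_{0}=0$, the kinetic part of $h_{\bk}(\alpha,\beta)\psi$ collapses to $-2\pi\,(\bk\cdot\bsigma)\,c\,\psi_{0}$, which already lies in the range of $P_{0}$. For the mass-insertion term, $P_{0}[\beta\chi_\alpha\sigma_3 c\,\psi_0]$ equals $\bigl(\int_\Omega\chi_\alpha(\bx)\,d\bx\bigr)\beta\sigma_3 c\,\psi_0$; a change of variables $\bx=\alpha\by$ combined with the fact that $\chi$ is supported in $\Omega\subset(-1/(2\alpha),1/(2\alpha)]^{2}$ (recall $\alpha\leq 1$) yields
\begin{equation*}
\int_\Omega\chi_\alpha(\bx)\,d\bx=\alpha^{2}\int_\Omega\chi(\by)\,d\by=\alpha^{2}\Phi.
\end{equation*}

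Combining the two contributions, I get the identity
\begin{equation*}
P_{0}h_{\bk}(\alpha,\beta)P_{0}\psi=\bigl(\mathbf{v}\cdot\bsigma\bigr)c\otimes\psi_{0},\qquad \mathbf{v}:=(-2\pi k_{1},-2\pi k_{2},\beta\alpha^{2}\Phi)\in\R^{3}.
\end{equation*}
Finally I would use the standard identity $(\mathbf{v}\cdot\bsigma)^{2}=|\mathbf{v}|^{2}\,\id_{\C^2}$, which implies $\|(\mathbf{v}\cdot\bsigma)c\|_{\C^{2}}=|\mathbf{v}|\,\|c\|_{\C^{2}}$ for every $c$. Since $|\mathbf{v}|\geq \beta\alpha^{2}\Phi$, this gives exactly the stated bound $\|P_{0}h_{\bk}(\alpha,\beta)P_{0}\psi\|\geq\beta\alpha^{2}\Phi\,\|\psi\|$.

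There is no real obstacle here; the only point requiring a little attention is to check that the $\alpha$-rescaling of the integral is genuinely $\alpha^{2}\Phi$ (not some cut-off version of $\Phi$), which follows from the assumption that $\chi$ is supported in $\Omega$ and $\alpha\in(0,1)$. Everything else is algebra in $2\times 2$ Pauli matrices.
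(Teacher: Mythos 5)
Your proof is correct and follows essentially the same route as the paper's: you compute $P_0 h_\bk(\alpha,\beta)P_0 = (-2\pi\bsigma\cdot\bk + \alpha^2\beta\Phi\,\sigma_3)P_0$ and then use $(\mathbf{v}\cdot\bsigma)^2=|\mathbf{v}|^2\id$ (equivalently, the anticommutation relations of the Pauli matrices) to get the lower bound. The only difference is expository --- you unpack the change of variables showing $\int_\Omega\chi_\alpha=\alpha^2\Phi$, which the paper leaves implicit.
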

\begin{proof}
For every $\bk \in \Omega$ we have:
\begin{align*}
  P_0h_\bk(\alpha,\beta)P_0=(-2\pi\bsigma\cdot \bk+\alpha^2\beta\Phi\sigma_3)
  P_0.
\end{align*}
Using the anticommutation relations of the Pauli matrices we get for
any $\psi\in P_0 L^2(\Omega,\C^2)$ that
\begin{align*}
  \norm{P_0h_\bk(\alpha,\beta)P_0\psi}^2=((2\pi\bk)^2+(\alpha^2\beta\Phi)^2)\norm{\psi}^2\ge (\alpha^2\beta\Phi)^2\norm{\psi}^2.
\end{align*}
\end{proof}
The previous lemma shows that $h_\bk(\alpha,\beta)$ has a spectral gap
of order $\beta\alpha^2$ on the range of $P_0$. In order to
investigate whether that is still the case on the full Hilbert space
we use the Feshbach inversion formula (see Equations (6.1) and (6.2) in \cite{N}). The latter states in this case that
$z\in \rho(h_\bk(\alpha,\beta))$ if the Feshbach operator
\begin{align*}
  \fesh_{P_0}(z):= P_0(h_\bk(\alpha,\beta) - z)P_0 -  
  \beta^2 P_0\chi_\alpha\sigma_3 Q_0 (Q_0 (h_\bk(\alpha,\beta)- z)Q_0)^{-1} Q_0 \chi_\alpha \sigma_3P_0, 
\end{align*}
is invertible on $P_0 L^2(\Omega,\C^2)$. Here we used that 
$P_0 h_\bk(\alpha,\beta) Q_0=\beta P_0 \chi_\alpha \sigma_3Q_0$.

The next lemma shows that the inverse of $Q_0 (h_\bk(\alpha,\beta)-
z)Q_0$ is well defined on the range of $Q_0$.
\begin{lemma}\label{lem2.2}
  There exists a constant $\delta\in (0,1)$ such that for all
  $\alpha\in (0,1/2)$ and $\beta>0$ with $\alpha\beta<\delta$, we have
  that $Q_0 (h_\bk(\alpha,\beta)- z)Q_0$ is invertible on the range of
  $Q_0$, for any $z\in [-\pi/2, \pi/2]$ and $\bk\in\Omega$.
\end{lemma}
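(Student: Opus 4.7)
The plan is a Neumann-series perturbation argument around the free fiber resolvent, exploiting the small support of $\chi_\alpha$ to gain a factor of $\alpha$.

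First, I would note that the unperturbed fiber $h_\bk(\alpha, 0) = (-i\nabla_\bx^{\rm PBC} - 2\pi\bk)\cdot\bsigma$ is diagonal in the Fourier basis $\{\psi_\bm\}_{\bm\in\Z^2}$ and hence commutes with every $P_\bm$, in particular with $Q_0$. Its spectrum on the range of $Q_0$ equals $\{\pm 2\pi|\bm - \bk|:\bm\in\Z^2\setminus\{0\}\}$, and since $\bk\in\Omega=(-1/2,1/2]^2$, an elementary geometric check gives $|\bm - \bk|\geq 1/2$ for every $\bm\neq 0$. Consequently this spectrum is disjoint from $(-\pi, \pi)$, and for every $z\in[-\pi/2,\pi/2]$ the restricted resolvent $(h_\bk(\alpha, 0)-z)^{-1}$ exists on the range of $Q_0$ with operator norm at most $2/\pi$.

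Second, using $[h_\bk(\alpha, 0), Q_0] = 0$, on the range of $Q_0$ the operator to be inverted factorizes as
$$Q_0(h_\bk(\alpha, \beta) - z)Q_0 = (h_\bk(\alpha, 0) - z)\bigl(I + \beta\,(h_\bk(\alpha, 0) - z)^{-1} Q_0 \chi_\alpha \sigma_3 Q_0\bigr),$$
so a Neumann series gives invertibility provided $\beta\|(h_\bk(\alpha, 0) - z)^{-1} Q_0 \chi_\alpha\|_{\rm op} < 1$. The whole task thus reduces to proving a bound
$$\|(h_\bk(\alpha, 0) - z)^{-1} Q_0 \chi_\alpha\|_{\rm op} \leq C\alpha,$$
with $C$ depending only on $\chi$; the lemma then follows with $\delta := 1/(2C)$. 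The naive bound combining $\|(h_\bk(\alpha, 0) - z)^{-1}\| \leq 2/\pi$ with $\|\chi_\alpha\|_\infty = \|\chi\|_\infty$ carries no $\alpha$-dependence, so the target estimate must make essential use of the fact that $\chi_\alpha$ is supported in a set of area $\alpha^2|S|$.

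My tool of choice is a Birman-Schwinger symmetrization: since $h_\bk(\alpha, 0)$ is self-adjoint and $z$ is real,
$$\|(h_\bk(\alpha, 0) - z)^{-1} Q_0 \chi_\alpha\|^2_{\rm op} = \|\chi_\alpha Q_0 (h_\bk(\alpha, 0) - z)^{-2} Q_0 \chi_\alpha\|_{\rm op}.$$
The right-hand side is a multiplication-sandwich whose integral kernel is supported in $(\alpha S)\times(\alpha S)$, so one may bound it by its Hilbert-Schmidt or Schur-type norm restricted to this small region. Since the integral kernel of the squared two-dimensional Dirac resolvent has at worst a logarithmic singularity on the diagonal, integrating its square over the small region gives a bound of order $\alpha^2$, and taking square roots yields the desired $O(\alpha)$ bound.

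The main technical obstacle is to perform this localized estimate sharply enough to avoid a residual $|\log\alpha|^{1/2}$ factor that would spoil the uniformity of $\delta$ as $\alpha\to 0$; eliminating it should rely on the cancellations provided by the $Q_0$ projection (which removes the constant Fourier mode responsible for the leading log term). Once the clean $O(\alpha)$ bound is in hand, invertibility of $Q_0(h_\bk(\alpha, \beta) - z)Q_0$ on the range of $Q_0$ follows for every $\alpha\in(0,1/2)$ and $\beta>0$ with $\alpha\beta<\delta$, uniformly in $\bk\in\Omega$ and $z\in[-\pi/2, \pi/2]$.
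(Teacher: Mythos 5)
Your overall strategy (Neumann series around the free fiber, exploiting the small support of $\chi_\alpha$, aiming for an $O(\alpha)$ bound so that $\alpha\beta<\delta$ suffices) matches the paper's plan. However, there is a genuine gap in the key estimate, and the fix you propose for it does not work.

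The problem is the method of proving $\|(h_\bk(\alpha,0)-z)^{-1}Q_0\chi_\alpha\|\leq C\alpha$. As you anticipate, a Hilbert--Schmidt or Schur bound on the kernel of $\chi_\alpha Q_0 R_0(z)^2 Q_0\chi_\alpha$ produces $\alpha^2|\log\alpha|$ (the squared 2D resolvent kernel has a logarithmic diagonal singularity, and $\int_{\Omega_\alpha\times\Omega_\alpha}|\log|\bx-\bx'||^2\,d\bx\,d\bx'\sim\alpha^4|\log\alpha|^2$), giving only $\alpha\sqrt{|\log\alpha|}$ for the quantity you need. This \emph{is} fatal for the stated lemma: for $\alpha\to 0$ with $\alpha\beta=\delta$ fixed, $\beta\alpha\sqrt{|\log\alpha|}=\delta\sqrt{|\log\alpha|}\to\infty$, so no uniform $\delta$ works. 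Your proposed remedy --- that the $Q_0$ projection removes ``the constant Fourier mode responsible for the leading log term'' --- is incorrect: the log singularity at the diagonal is a short-distance (UV, large-$|\bm|$) phenomenon; dropping the $\bm=0$ term only subtracts a bounded constant from the kernel and leaves the singularity intact. There is no cancellation to be had from $Q_0$ here.

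The paper avoids this entirely by a different symmetrization. Instead of putting the full $\chi_\alpha$ on one side of the resolvent (or the squared resolvent), it factors $\beta\chi_\alpha\sigma_3 = U W$ with $U=\sqrt{\beta}\,Q_0\,\mathrm{sgn}(\chi_\alpha)|\chi_\alpha|^{1/2}$ and $W=\sqrt{\beta}\,|\chi_\alpha|^{1/2}\sigma_3 Q_0$, and controls the Neumann/Birman--Schwinger series through $\|W R_0(z) U\|$, i.e. $|\chi_\alpha|^{1/2}$ sitting on \emph{both} sides of a single (not squared) resolvent, whose kernel is $\lesssim e^{-|\bx-\bx'|}/|\bx-\bx'|$. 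The Hardy--Littlewood--Sobolev inequality at exponent $4/3$ together with $\||\chi_\alpha|^{1/2}f\|_{4/3}\le\|\chi_\alpha\|_2^{1/2}\|f\|_2=\sqrt{\alpha}\|f\|_2$ then gives a clean $O(\alpha)$ bound with no logarithm, from which the lemma follows. (The paper also needs a separate step to handle real $z$: the kernel bound is known for $z=\pm i$, and real $z\in[-\pi/2,\pi/2]$ is reached via a finite resolvent expansion around $z=i$ combined with $\|R_0(z)\|\le 2/\pi$; your sketch does not address this point.) To close the gap in your argument you would need to replace the $L^2$/Hilbert--Schmidt estimate on the squared resolvent by this $L^{4/3}$--HLS estimate on the single resolvent with the $|\chi_\alpha|^{1/2}$ split; otherwise you only obtain the weaker bound $O(\alpha\sqrt{|\log\alpha|})$ or, via the asymmetric HLS estimate, $O(\sqrt{\alpha})$, neither of which yields the claimed uniform $\delta$.
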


\vspace{0.5cm}

The following lemma controls the second term of the Feshbach operator $\fesh_{P_0}(z)$
\begin{align*}
  \B_{P_0}(z):=\beta^2 P_0\chi_\alpha\sigma_3 Q_0 (Q_0 (h_\bk(\alpha,\beta)- z)Q_0)^{-1} Q_0 \chi_\alpha\sigma_3 P_0.
\end{align*}

\begin{lemma}\label{lem2.3}
  There exist two constants $\delta\in (0,1)$ and $ C>0$ such that for all
  $\alpha\in (0,1/2)$ and $\beta>0$ with $\alpha\beta<\delta$ we have
\begin{align*}
  \norm{\B_{P_0}(z)\psi}\le C\beta^2 \alpha^3\norm{\psi},
\end{align*}
for any $z\in [-\pi/2, \pi/2]$, $\bk\in\Omega$, and $\psi\in P_0 L^2(\Omega,\C^2)$.
\end{lemma}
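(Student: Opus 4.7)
The plan is to extract the factor $\alpha^3$ by separating two complementary smallness features of $\chi_\alpha$: its $L^2$-norm equals $\alpha$ and supplies $\alpha^2$ through the two outer $\chi_\alpha$-factors, while its support $B_\alpha:=\alpha S\subset\Omega$ has diameter of order $\alpha$ and provides the remaining $\alpha$ through the middle resolvent. Let $M_\alpha$ denote the multiplication operator by the indicator function of $B_\alpha$, and set $R:=(Q_0(h_\bk(\alpha,\beta)-z)Q_0)^{-1}$, which is well-defined by Lemma~\ref{lem2.2}. Using $\chi_\alpha=M_\alpha\chi_\alpha=\chi_\alpha M_\alpha$ together with $Q_0RQ_0=R$, one obtains the factorisation
\begin{align*}
\B_{P_0}(z) = \beta^2\, A\, (M_\alpha R M_\alpha)\, A^*,\qquad A:=P_0\chi_\alpha\sigma_3.
\end{align*}
For $w\in L^2(\Omega,\C^2)$ one has $Aw=\psi_0\,\int_\Omega\chi_\alpha\sigma_3 w\,d\bx$ as an element of $P_0L^2(\Omega,\C^2)$, so Cauchy--Schwarz together with $\|\chi_\alpha\|_2=\alpha$ gives $\|Aw\|\le\alpha\|w\|_2$; the same bound applies to $A^*$. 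The lemma therefore reduces to proving
\begin{align*}
\|M_\alpha R M_\alpha\|\le C\alpha.
\end{align*}

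For the free piece $X_0:=M_\alpha R_0 M_\alpha$ with $R_0$ the inverse of $Q_0(-\ri\bsigma\cdot\nabla_\bx^{\rm PBC}-2\pi\bk\cdot\bsigma-z)Q_0$ on $Q_0L^2$, I use the identity $(-\ri\bsigma\cdot\nabla-2\pi\bk\cdot\bsigma-z)(-\ri\bsigma\cdot\nabla-2\pi\bk\cdot\bsigma+z)=(-\ri\nabla-2\pi\bk)^2-z^2$ to realise $R_0$ as a first-order differential operator applied to the scalar Green's function of a periodic elliptic operator on the two-torus. By isolating the Euclidean part (whose local structure gives a logarithmic singularity) from a smooth periodic correction, the integral kernel $k_0(\bx,\mathbf{y})$ of $R_0$ obeys $|k_0(\bx,\mathbf{y})|\le C|\bx-\mathbf{y}|^{-1}+C'$ uniformly in $\bk\in\Omega$ and $z\in[-\pi/2,\pi/2]$. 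Since $\int_{B_\alpha}|\bx-\mathbf{y}|^{-1}d\mathbf{y}\le 2\pi\alpha$ for every $\bx\in B_\alpha$, Schur's test then gives $\|X_0\|\le C\alpha$.

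To pass from $R_0$ to $R$, apply the resolvent identity $R=R_0-\beta R_0 K_\alpha R$ with $K_\alpha:=Q_0\chi_\alpha\sigma_3 Q_0$. Using $R_0Q_0=R_0$, $Q_0 R=R$ and $\chi_\alpha=M_\alpha\chi_\alpha M_\alpha$ one rearranges $M_\alpha R_0 K_\alpha R M_\alpha=X_0\,\chi_\alpha\sigma_3\, X$, where $X:=M_\alpha R M_\alpha$; hence $X$ satisfies the closed equation
\begin{align*}
X = X_0 - \beta\, X_0\,\chi_\alpha\sigma_3\, X.
\end{align*}
Since $\|X_0\chi_\alpha\sigma_3\|\le\|\chi\|_\infty\|X_0\|\le C\|\chi\|_\infty\alpha$, choosing $\delta$ small enough depending on $\|\chi\|_\infty$ forces $\beta\|X_0\chi_\alpha\sigma_3\|\le 1/2$ whenever $\alpha\beta<\delta$, so that $\|X\|\le 2\|X_0\|\le C\alpha$. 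Combining with the bounds on $A$ and $A^*$ yields $\|\B_{P_0}(z)\|\le C\beta^2\alpha^3$, as required.

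The main technical obstacle is the kernel estimate for $R_0$: one has to show that removing the zero Fourier mode and imposing periodicity affect the kernel of the two-dimensional Dirac resolvent only through a smooth remainder, so that its local singularity remains of the Euclidean type $|\bx-\mathbf{y}|^{-1}$, uniformly in $\bk\in\Omega$ and $z\in[-\pi/2,\pi/2]$. Once this is in place, the Schur test, the closed equation above, and the outer Cauchy--Schwarz bound combine in a routine way to give the claimed $\beta^2\alpha^3$ estimate.
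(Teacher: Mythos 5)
Your overall architecture is sound and genuinely different from the paper's. You extract $\alpha^2$ from the two outer factors via $\|\chi_\alpha\|_2=\alpha$ (the paper does the same thing through $\|\sqrt{|\chi_\alpha|}P_0\|\le\alpha$), and you reduce everything to $\|M_\alpha R M_\alpha\|\le C\alpha$, passing from $R_0$ to $R$ by a Neumann-type closed equation; this parallels the paper's Lemma \ref{resolventidentity} with its $W,U$ factorization. Where you diverge is the free estimate: you want a pointwise kernel bound $|k_0(\bx,\mathbf{y})|\le C|\bx-\mathbf{y}|^{-1}+C'$ for the reduced fiber resolvent $R_0(z)$ at \emph{real} $z\in[-\pi/2,\pi/2]$, followed by a Schur test over the support of $\chi_\alpha$, whereas the paper never estimates $R_0(z)$ directly at real $z$: it periodizes the explicit $\R^2$ Dirac kernel at $z=i$ (Appendix, exponential decay), uses Hardy--Littlewood--Sobolev to get $\|\,|\chi_\alpha|^{1/2}(h_\bk^{(0)}-i)^{-1}|\chi_\alpha|^{1/2}\|\lesssim\alpha$, and then transfers to real $z$ by the first resolvent identity combined with the uniform bound $\|R_0(z)\|\le 2/\pi$ from the spectral gap (Lemmas \ref{lem3.0}--\ref{lem3.2}). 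If your kernel bound is granted, your Schur-test argument is cleaner and avoids HLS. One minor point: your smallness condition $\beta\|X_0\chi_\alpha\sigma_3\|\le 1/2$ makes $\delta$ depend on $\|\chi\|_\infty$, while the paper's constants only use the normalization $\|\chi\|_2=1$; since $\chi$ is fixed this is acceptable for the lemma as stated, but it is a genuine weakening.

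The gap is precisely the kernel estimate, which you yourself flag as ``the main technical obstacle'' and then assert. As sketched it does not go through directly: for real $z$ the comparison operator $(-\ri\nabla-2\pi\bk)^2-z^2$ on $\R^2$ has $z^2\in[0,\pi^2/4]$ inside its essential spectrum, so there is no decaying ``Euclidean part'' to periodize at that energy; you must compare at a shifted spectral parameter (e.g.\ $-\Delta+1$, or $z=i$ as the paper does) and control the difference. Doing this on the Fourier side, one subtraction leaves a correction whose coefficients are $O(|\bm|^{-3})$ (fine), but after applying the first-order operator $-\ri\bsigma\cdot\nabla$ the correction's coefficients are only $O(|\bm|^{-2})$, which is not absolutely summable in $d=2$; you need either a second subtraction or to accept a logarithmic singularity in the remainder (harmless for your Schur test on $B_\alpha$, since $\log(1/r)\lesssim r^{-1}$ there, but then the stated bound must be argued, not asserted). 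Uniformity in $\bk\in\Omega$ and $z\in[-\pi/2,\pi/2]$ also has to be tracked through this, using $|\bm-\bk|\ge 1/2$ and $(2\pi)^2|\bm-\bk|^2-z^2\gtrsim|\bm|^2$. All of this is true and fixable — indeed it essentially follows from the paper's Appendix bound plus the lattice-sum identity \eqref{ec-R-r} and a resolvent identity in $z$ — but as written the heart of the lemma is a claim, not a proof.
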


\vspace{0.5cm}

Having stated all the above ingredients we can proceed to the proof of our first main
result.
\begin{proof}[Proof of Theorem \ref{thm:main}]
  In view of \eqref{spectra} it is enough to show the invertibility of
  the Feshbach operator uniformly in $\bk\in\Omega$. Using Lemmas
  \ref{lem2.1} and \ref{lem2.3} we get that for any
  $\psi\in P_0 L^2(\Omega,\C^2)$
\begin{align*}
  \norm{\fesh_{P_0}(z)\psi}&\ge  \norm{ (P_0(h_\bk(\alpha,\beta) - z)P_0 ) \psi} -\norm {\B _{P_0}(z)\psi}\\
         &\ge (\beta\alpha^2\Phi-|z|-c\alpha^3 \beta^2)\norm{\psi}.
\end{align*}
This concludes the proof by picking $\alpha\beta$ so small that
$\Phi> C\alpha\beta$.
\end{proof}
\subsection{Analysis of the Feshbach operator}$\ $

In this section we provide the proofs of Lemmas \ref{lem2.2} and
\ref{lem2.3} from the previous section. 
For that sake, let us first state some preliminary estimates.
Let
 \begin{align}
 h_\bk^{(0)}:=  (-i\nabla_\bx^{\rm PBC}-2\pi \bk)\cdot\bsigma 
 \end{align}
\begin{lemma}\label{lem3.1}
There exists a constant $C>0$, independent of
$\alpha\in (0,1)$, such that for all $\bk\in\Omega$
\begin{eqnarray}
&&\|\sqrt{|\chi_\alpha|} P_0\| \le
\alpha\label{eq:int-1}\\
&&\norm{|\chi_\alpha|^{1/2} (h_\bk^{(0)}\pm i)^{-1}
    |\chi_\alpha|^{1/2}}\le C (\alpha+\frac{\alpha^2}{1-\alpha}) \label{eq:res-est-1}\\
&&\displaystyle \norm{|\chi_\alpha|^{1/2} (h_\bk^{(0)}\pm i)^{-1}
    }\le C(\sqrt{\alpha}+\frac{\alpha}{1-\alpha}). \label{eq:res-est-2}
\end{eqnarray}
\end{lemma}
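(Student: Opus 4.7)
For \eqref{eq:int-1}, since $P_0=|\psi_0\rangle\langle\psi_0|\otimes\id_{\C^2}$ is rank-one in the scalar factor (with $\psi_0\equiv 1$ on $\Omega$), a direct computation gives
\begin{equation*}
 \norm{\sqrt{|\chi_\alpha|}\,P_0\psi}^2 = \|\chi_\alpha\|_1\,\bigl\|\langle\psi_0,\psi\rangle\bigr\|_{\C^2}^2\le \|\chi_\alpha\|_1\,\norm{\psi}^2
\end{equation*}
for every $\psi\in L^2(\Omega,\C^2)$. The rescaling $\by=\bx/\alpha$ (using $\alpha\le 1$ and $\mathrm{supp}\,\chi\subseteq\Omega$) yields $\|\chi_\alpha\|_1=\alpha^2\|\chi\|_1\le\alpha^2$ by \eqref{chi-norm}, hence the bound $\alpha$.

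For \eqref{eq:res-est-1} and \eqref{eq:res-est-2} the plan is to establish a uniform pointwise estimate on the integral kernel $K_\bk^\pm(\bx,\by)$ of $(h_\bk^{(0)}\pm i)^{-1}$ and then invoke Schur's test. Writing
\begin{equation*}
(h_\bk^{(0)}\pm i)^{-1} = (h_\bk^{(0)}\mp i)\bigl(G_\bk\otimes\id_{\C^2}\bigr), \qquad G_\bk:=\bigl(|-i\nabla^{\rm PBC}-2\pi\bk|^2+1\bigr)^{-1},
\end{equation*}
the scalar Green's function has Fourier series $G_\bk(\bx-\by)=\sum_{\bm}e^{2\pi i\bm\cdot(\bx-\by)}/(|2\pi(\bm-\bk)|^2+1)$, and a Poisson-summation argument identifies $G_\bk$ with a phase-twisted periodization of the $\R^2$-Bessel kernel $\tfrac{1}{2\pi}K_0(|\cdot|)$. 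The $\bgamma=0$ term produces the diagonal logarithmic singularity, while the $\bgamma\neq 0$ terms are smooth and uniformly bounded in $\bk\in\Omega$ thanks to the exponential decay of $K_0$. Differentiating to account for the factor $h_\bk^{(0)}\mp i$ (which produces $K_1$-type kernels with a $1/|\cdot|$ singularity) then yields, uniformly in $\bk\in\Omega$,
\begin{equation*}
|K_\bk^\pm(\bx,\by)| \le C\Bigl(\tfrac{1}{|\bx-\by|_T}+\bigl|\log|\bx-\by|_T\bigr|+1\Bigr),
\end{equation*}
where $|\cdot|_T$ denotes the torus distance.

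With this pointwise bound in hand I apply Schur's test. For \eqref{eq:res-est-1}, both factors $|\chi_\alpha|^{1/2}$ restrict the kernel to $\alpha S\times\alpha S\subseteq\alpha\Omega\times\alpha\Omega$. For fixed $\bx\in\alpha\Omega$ the elementary integrals $\int_{\alpha\Omega}|\bx-\by|_T^{-1}\,d\by=O(\alpha)$ and $\int_{\alpha\Omega}|\log|\bx-\by|_T|\,d\by=O(\alpha^2|\log\alpha|)$ combine with $\||\chi_\alpha|^{1/2}\|_\infty\le\|\chi\|_\infty^{1/2}$ to produce a Schur constant of size $\alpha+\alpha^2|\log\alpha|$, which for $\alpha\le 1/2$ can be absorbed into the claimed $C\bigl(\alpha+\alpha^2/(1-\alpha)\bigr)$. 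For \eqref{eq:res-est-2}, only the $\bx$-integration is restricted: $M_1=\sup_\bx|\chi_\alpha(\bx)|^{1/2}\int_\Omega|K_\bk^\pm(\bx,\by)|\,d\by$ is uniformly bounded (the kernel is $L^1$ in $\by$ over the whole cell), while $M_2=\sup_\by\int_{\alpha\Omega}|\chi_\alpha(\bx)|^{1/2}|K_\bk^\pm(\bx,\by)|\,d\bx$ is still of order $\alpha+\alpha^2/(1-\alpha)$ by the previous computation; taking $\sqrt{M_1 M_2}$ gives the $\sqrt{\alpha}+\alpha/(1-\alpha)$ bound.

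The main technical obstacle is the uniform-in-$\bk$ control of $K_\bk^\pm$: one must carefully track the phase twists in the Poisson summation and the accumulation of the periodic images, the latter being responsible for the $1/(1-\alpha)$ factor (encoding the proximity of the periodic copies of $\alpha\Omega$ on the torus). Once this kernel estimate is in place, the rest is elementary Bessel-function bookkeeping and Schur's test.
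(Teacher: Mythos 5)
Your treatment of \eqref{eq:int-1} is essentially identical to the paper's: compute the rank-one operator $\sqrt{|\chi_\alpha|}P_0$ directly, use Cauchy--Schwarz, and scale to get $\|\chi_\alpha\|_1=\alpha^2\|\chi\|_1\le\alpha^2$.

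For \eqref{eq:res-est-1} and \eqref{eq:res-est-2} you take a genuinely different route. The paper also starts from a pointwise bound on the free resolvent kernel, $|(H_0\pm i)^{-1}(\bx,\bx')|\le C e^{-|\bx-\bx'|}/|\bx-\bx'|$ (proved in the Appendix from the $K_0$, $K_1$ Bessel formulas, absorbing the logarithm into the $1/r$ singularity), and the same periodization identity \eqref{ec-R-r} that you obtain via Poisson summation. But after that the strategies diverge: the paper estimates the bilinear form $|\langle\phi,(h^{(0)}_\bk-i)^{-1}\psi\rangle|$ by separating $\gamma=0$ (handled by the Hardy--Littlewood--Sobolev inequality at exponent $4/3$) from $\gamma\ne 0$ (handled by the distance lower bound $\ge 1-\alpha$ and an $L^1$ estimate), then closes by H\"older, $\||\chi_\alpha|^{1/2}f\|_{4/3}\le\|\chi_\alpha\|_2^{1/2}\|f\|_2$. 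This exploits only the normalization $\|\chi\|_2=1$, producing a constant with no $\|\chi\|_\infty$ dependence. You instead apply Schur's test, which forces you to pay $\|\chi\|_\infty^{1/2}$ per factor of $|\chi_\alpha|^{1/2}$; this is acceptable (the lemma only asserts independence from $\alpha$) but gives a weaker dependence on $\chi$ than the paper establishes, and would not propagate as cleanly if one later wanted the constant in Theorem~\ref{thm:main} to depend on $\chi$ only through $L^2$ and $L^1$ norms. Your approach is more elementary (no HLS) and your resulting bounds are in fact slightly sharper in $\alpha$ (you get $\alpha+\alpha^2|\log\alpha|$ rather than $\alpha+\alpha^2/(1-\alpha)$, at least for $\alpha\le 1/2$), though you should note that for $\alpha$ close to $1$ the wrap-around contribution on the torus reintroduces the $(1-\alpha)^{-1}$-type factor — this is harmless for the lemma as used downstream (where $\alpha\le 1/2$), but worth flagging since the lemma is stated for $\alpha\in(0,1)$. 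Finally, the uniform-in-$\bk$ kernel bound that you identify as "the main technical obstacle" is indeed the content of the paper's Appendix plus \eqref{ec-R-r}; your Poisson-summation derivation is equivalent but carried out intrinsically on the torus rather than by periodizing the whole-plane resolvent.
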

\begin{proof}
  In order to show \eqref{eq:int-1} we compute for
  $f,g\in L^2(\Omega,\C^2)$ (see also \eqref{chi-norm}):
\begin{align*}
|\sps{f}{\sqrt{|\chi_\alpha|}P_0g}| 
&\le |\sps{f}{\sqrt{|\chi_\alpha|} \psi_0}  |\,|\sps{\psi_0}{g}|
\le \|\chi_\alpha\|_1^{1/2} \|f\|_2\|g\|_2\le \alpha \|f\|_2\|g\|_2.
\end{align*}
We now turn to the proof of equations \eqref{eq:res-est-1} and 
\eqref{eq:res-est-2}.
  Denote the integral kernels of $(H_0 - i)^{-1}$ and
  $(h_\bk^{(0)}\pm i)^{-1}$ by $(H_0-i)^{-1}(\bx,\bx')$ and
  $(h_\bk^{(0)}\pm i)^{-1}(\bx,\bx')$. In the proof we will use the
  identity for $\bx\not=\bx'$
  \begin{align}\label{ec-R-r}
 (h_\bk^{(0)} -i)^{-1} (\bx,\bx') = \sum_{\gamma\in\Z^2} e^{2\pi i
    \bk\cdot (\bx+\gamma - \bx')} (H_0-i)^{-1} (\bx+\gamma,\bx'),\quad
    \bx,\bx'\in \Omega.
  \end{align}
 Let us first estimate the quadratic form of $ (h_\bk^{(0)} -i)^{-1}$
 for any $\phi,\psi\in L^2(\Omega,\C^2)$. According to
  Lemma \ref{resolvent-bound} we have
\begin{align*}
   |\sps{\phi}{(h_\bk^{(0)} -i)^{-1}\psi}|&\le C\sum_{\gamma\in\Z^2} 
\int_{\Omega\times \Omega} |\phi(\bx)|
\frac{e^{-|\bx-\bx'-\gamma|}}{|\bx-\bx'-\gamma|}
|\psi(\bx')|d\bx d\bx'\\
&\le C e^{\sqrt{2}} \sum_{\gamma\in\Z^2} e^{-|\gamma|}
\int_{\Omega\times \Omega} 
\frac{|\phi(\bx)||\psi(\bx')| }{|\bx-\bx'-\gamma|}
d\bx d\bx',
\end{align*}
where in the last step we bound the exponential using that 
$|\bx-\bx'-\gamma|\ge |\gamma|-|\bx-\bx'|\ge  |\gamma|-\sqrt{2}.
$

Now assume that the support of $\phi$ lies in
$\Omega_\alpha:=(-\alpha/2,\alpha/2]^2$, i.e., $\bx\in \Omega_\alpha$
above.  Then it is easy to check that if $\gamma\not= 0$ then
$|\bx-\bx'+\gamma|\ge (1-\alpha)$. Therefore, we find for such a case that
\begin{align*}
  |\sps{\phi}{(h_\bk^{(0)} -i)^{-1}\psi}|\le C e^{\sqrt{2}} \Big(
  \int_{\Omega\times \Omega}
\frac{ |\phi(\bx)||\psi(\bx')|}{|\bx-\bx'|}
  d\bx d\bx'
  +\frac{1}{1-\alpha}(\sum_{\gamma\not=0} e^{-|\gamma|})
  \|\phi\|_1 \|\psi\|_1   \Big).
\end{align*}
Using the Hardy-Littlewood-Sobolev inequality for $p=r=4/3$ (see Lieb and Loss Theorem
4.3) we get
\begin{equation}
  \label{eq:4.23}
  \begin{split}
 \int_{\Omega\times \Omega} 
\frac{ |\phi(\bx)| |\psi(\bx')|}
{|\bx-\bx'|}
  d\bx d\bx'\le C \|\phi\|_{4/3} \|\psi\|_{4/3} . 
  \end{split}
\end{equation}
Thus, denoting  the universal constants by $C$ we obtain, for any
$\phi,\psi\in L^2(\Omega,\C^2)$ with
${\rm supp}(\phi)\subset \Omega_\alpha$,
\begin{align}
  \label{key-bound}
   |\sps{\phi}{(h_\bk^{(0)} -i)^{-1}\psi}|\le C \big( \|\phi\|_{4/3}
  \|\psi\|_{4/3}
+(1-\alpha)^{-1}\|\phi\|_1 \|\psi\|_1 \big).
\end{align}
For some $f\in L^2(\Omega,\C^2)$
we observe that (see Remark \ref{chi-norm}) 
\begin{align}
  \label{x1}
  \||\chi_\alpha|^{1/2}f\|_1\le \| |\chi_\alpha|^{1/2}\|_2\|f\|_2\le
  \alpha  \|f\|_2.
\end{align}
Moreover, H\"older's inequality yields 
\begin{equation}
  \label{eq:3k}
\begin{split}  
  \||\chi_\alpha|^{1/2}f\|_{4/3}&\le \big(\||\chi_\alpha|^{2/3}\|_3\,
\||f|^{4/3} \|_{3/2}\big)^{3/4}\\
&\,\,=\|\chi_\alpha\|_2^{1/2} \|f\|_2\le  
\sqrt{\alpha} \|f\|_2.
\end{split}
\end{equation}
In order to get the desired bounds we recall that the norm of an
operator $T$ is given by
$\|T\|={\rm sup}_{f,g\not=0} | \sps{f}{Tg}|/(\|f\|\|g\|)$.  Hence we
find \eqref{eq:res-est-1} by using \eqref{key-bound} with
$\phi=|\chi_\alpha|^{1/2}f $ and $\psi=|\chi_\alpha|^{1/2}g$ together
with the bounds \eqref{x1} and \eqref{eq:3k}. Analogously, we obtain 
\eqref{eq:res-est-2} using again \eqref{key-bound} with
$\phi=|\chi_\alpha|^{1/2}f $ and $\psi=g$.
\end{proof}
%
%
\begin{lemma}\label{lem3.0} For any $f\in Q_0\mathcal{D}(h_\bk^{(0)})$ we have
\begin{align}
 \| h_{\bf k}^{(0)} Q_0 f\| \geq \pi\|f\| .
\end{align}
\end{lemma}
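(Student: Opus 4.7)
The plan is to diagonalize $h_\bk^{(0)}$ in the Fourier basis $\{\psi_\bm\}_{\bm\in\Z^2}$. Since $\psi_\bm$ is an eigenfunction of $-i\nabla_\bx^{\rm PBC}$ with eigenvalue $2\pi\bm$, the operator $h_\bk^{(0)}$ acts on each mode $\bm$ by the $2\times 2$ matrix $2\pi(\bm-\bk)\cdot\bsigma$. I would therefore write any $f\in Q_0\mathcal{D}(h_\bk^{(0)})$ as
\begin{equation*}
f(\bx)=\sum_{\bm\neq 0}\hat f(\bm)\,\psi_\bm(\bx),\qquad \hat f(\bm)\in\C^2,
\end{equation*}
where the $\bm=0$ term is absent precisely because $f$ lies in the range of $Q_0$.

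Next I would compute
\begin{equation*}
h_\bk^{(0)}f=\sum_{\bm\neq 0}2\pi\bigl((\bm-\bk)\cdot\bsigma\bigr)\hat f(\bm)\,\psi_\bm,
\end{equation*}
and use the standard Pauli identity $\bigl((\bm-\bk)\cdot\bsigma\bigr)^2=|\bm-\bk|^2\,\id_{\C^2}$ to get, for every $\bm$ and every $v\in\C^2$,
\begin{equation*}
\|((\bm-\bk)\cdot\bsigma)v\|_{\C^2}^2=|\bm-\bk|^2\,\|v\|_{\C^2}^2.
\end{equation*}
Combined with Parseval's identity on $L^2(\Omega,\C^2)$, this gives
\begin{equation*}
\|h_\bk^{(0)}f\|^2=(2\pi)^2\sum_{\bm\neq 0}|\bm-\bk|^2\,\|\hat f(\bm)\|_{\C^2}^2.
\end{equation*}

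The remaining step is the elementary geometric observation that
\begin{equation*}
\inf_{\bk\in\Omega,\,\bm\in\Z^2\setminus\{0\}}|\bm-\bk|=\tfrac{1}{2}.
\end{equation*}
Indeed, if $\bm=(m_1,m_2)\neq 0$ and, say, $m_1\neq 0$, then since $|k_1|\leq 1/2$ one has $|m_1-k_1|\geq |m_1|-|k_1|\geq 1/2$; the case $m_2\neq 0$ is symmetric. Inserting this bound into the Parseval identity above yields
\begin{equation*}
\|h_\bk^{(0)}f\|^2\geq (2\pi)^2\cdot\tfrac{1}{4}\sum_{\bm\neq 0}\|\hat f(\bm)\|_{\C^2}^2=\pi^2\|f\|^2,
\end{equation*}
which is exactly the claim. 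There is no real obstacle here; the only point requiring a moment's care is the verification of the minimum distance $1/2$, which uses essentially that $\Omega=(-1/2,1/2]^2$ is a fundamental domain for $\Z^2$ and hence $\bk$ lies at distance at most $1/2$ from the origin in each coordinate.
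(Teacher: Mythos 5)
Your proof is correct and follows essentially the same approach as the paper's: decompose into Fourier modes $P_\bm$, use that $h_\bk^{(0)}$ acts diagonally as $2\pi\bsigma\cdot(\bm-\bk)$ on each mode, apply the Pauli identity $(\bm-\bk)\cdot\bsigma)^2 = |\bm-\bk|^2\,\id$, and bound $|\bm-\bk|\geq 1/2$ for $\bm\in\Z^2\setminus\{0\}$ and $\bk\in\Omega$. You simply spell out the Parseval step and the elementary distance bound that the paper leaves implicit.
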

\begin{proof}
For all $\bm\in\Z^2$ and $\bk\in\Omega$, we have the identity
$$
 P_\bm h_\bk^{(0)} P_\bm = 2\pi\bsigma\cdot(\bm - \bk) P_\bm .
$$
Thus, we obtain for all $f\in Q_0\mathcal{D}(h_\bk^{(0)})$, 
$$
\| h_\bk^{(0)}Q_0 f\|^2=(2\pi)^2\sum_{\bm\not=0}\|\bsigma\cdot(\bm -
\bk) P_\bm f\|^2\ge \pi^2 \sum_{\bm\not=0}\| P_\bm f\|^2.
$$
\end{proof}
Let us introduce some notation: For a self-adjoint operator $T$ and an orthogonal projection $Q$, we define
\begin{align*}
  \rho_{Q}(T):=\left\{ z\in\C\,\mbox{ such that }\,Q(T-z)Q:
{\rm Ran}Q\to  {\rm Ran}
Q\quad \mbox{is invertible}\right\}.
\end{align*} 
We set 
\begin{align*}
  &R_0(z):=\left(Q_0(h_{\bf k}^{(0)}-z)Q_0\!\!\upharpoonright_{{\rm
  Ran}Q_0} \right)^{-1}, \quad z\in  \rho_{Q_0}(h_{\bf k}^{(0)}),\\
&R(z):=\left(Q_0(h_{\bf k}(\alpha,\beta)-z)Q_0\!\!\upharpoonright_{{\rm
  Ran}Q_0} \right)^{-1}, \quad z\in  \rho_{Q_0}(h_{\bf k}(\alpha,\beta)).
\end{align*}
Let $U:L^2(\Omega,\C^2) \to \mathrm{Ran} Q_0$ and $W: \mathrm{Ran}Q_0 \to L^2(\Omega,\C^2)$ be defined as 
\begin{align*}
 & W := \sqrt{\beta} \left(\sqrt{|\chi_\alpha}|
 \sigma_3\right) Q_0, \\
 & U: = \sqrt{\beta} Q_0 \left(\mathrm{sgn}(\chi_\alpha) \sqrt{|\chi_\alpha|} \right) .
\end{align*}
%
%
%
%
\begin{lemma}\label{lem3.2} There exists $C>0$, independent of 
$\alpha\in (0,1/2]$ and $\beta>0$,
such that for all $|z|\leq \pi/2$,
$$
 \| W R_0(z) U \| < C\alpha\beta .
$$
\end{lemma}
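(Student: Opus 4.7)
The plan is to combine the block-diagonal structure of $h_\bk^{(0)}$ with an iterated resolvent identity that symmetrises the expression, so that the two-sided estimate \eqref{eq:res-est-1} and the one-sided estimate \eqref{eq:res-est-2} of Lemma~\ref{lem3.1} can be exploited simultaneously on both sides of the resolvent.

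First I would observe that $h_\bk^{(0)}$ is block-diagonal with respect to the decomposition induced by $\{P_\bm\}_{\bm\in\Z^2}$, so $[Q_0,h_\bk^{(0)}]=0$. Consequently, on $\mathrm{Ran}\,Q_0$ the operator $R_0(z)$ coincides with $Q_0(h_\bk^{(0)}-z)^{-1}Q_0$, and Lemma~\ref{lem3.0} combined with the condition $|z|\le\pi/2$ gives the bound $\|Q_0(h_\bk^{(0)}-z)^{-1}Q_0\|\le 2/\pi$. Since $\sigma_3$ and $\mathrm{sgn}(\chi_\alpha)$ have modulus at most $1$, the claim reduces to proving
\begin{equation*}
\bigl\|\sqrt{|\chi_\alpha|}\,Q_0(h_\bk^{(0)}-z)^{-1}Q_0\,\sqrt{|\chi_\alpha|}\bigr\|\le C\alpha.
\end{equation*}

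The central idea is then to apply the twice-iterated resolvent identity, writing $h\equiv h_\bk^{(0)}$,
\begin{equation*}
(h-z)^{-1}=(h-i)^{-1}+(z-i)(h+i)^{-1}(h-i)^{-1}+(z^2+1)(h+i)^{-1}(h-z)^{-1}(h-i)^{-1},
\end{equation*}
and to sandwich each summand by $\sqrt{|\chi_\alpha|}Q_0$ on the left and $Q_0\sqrt{|\chi_\alpha|}$ on the right. For the first summand I expand $Q_0={\rm Id}-P_0$; the diagonal piece $\sqrt{|\chi_\alpha|}(h-i)^{-1}\sqrt{|\chi_\alpha|}$ is $O(\alpha)$ by \eqref{eq:res-est-1}, while the three cross-terms involving $P_0$ are handled by pairing \eqref{eq:int-1} with \eqref{eq:res-est-2} and yield contributions of order $\alpha^{3/2}$ or $\alpha^2$. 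For the second and third summands I factor as
\begin{equation*}
\bigl[\sqrt{|\chi_\alpha|}\,Q_0(h+i)^{-1}\bigr]\cdot M\cdot\bigl[(h-i)^{-1}Q_0\,\sqrt{|\chi_\alpha|}\bigr],
\end{equation*}
where $M$ is the identity in the second summand and $Q_0(h-z)^{-1}Q_0$ (bounded by $2/\pi$ via Lemma~\ref{lem3.0}) in the third. Each outer factor is $O(\sqrt\alpha)$ by \eqref{eq:res-est-2} together with \eqref{eq:int-1}, so both summands contribute $O(\alpha)$.

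The main obstacle is that the one-sided estimate \eqref{eq:res-est-2} alone only produces $O(\sqrt\alpha)$, so a naive bound of the form $\|\sqrt{|\chi_\alpha|}(h-z)^{-1}\|\cdot\|\sqrt{|\chi_\alpha|}\|$ loses the correct scaling. The iterated resolvent identity is precisely the device that replaces the singular factor $(h-z)^{-1}$ by a product in which the (bounded) singular piece is flanked by the regularising half-resolvents $(h\pm i)^{-1}$, each of which can absorb one copy of $\sqrt{|\chi_\alpha|}$ and supply the missing $\sqrt\alpha$. Collecting the three contributions and multiplying by $\beta$ then yields $\|W R_0(z) U\|\le C\alpha\beta$ as claimed.
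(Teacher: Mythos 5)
Your argument is essentially the same as the paper's: both use a twice-iterated resolvent identity to express $R_0(z)$ through the regularising half-resolvents $R_0(\pm i)$ flanking a bounded middle factor, and both exploit $[P_0,h_\bk^{(0)}]=0$ to reduce the $Q_0$-corrections to the estimates \eqref{eq:int-1}--\eqref{eq:res-est-2} of Lemma~\ref{lem3.1}. The only cosmetic difference is that you pair $(h+i)^{-1}$ with $(h-i)^{-1}$ where the paper uses $(h-i)^{-2}$; also note that $(h_\bk^{(0)}-z)^{-1}$ does not exist on the full space for real $z$ (since $\pm2\pi|\bk|$ can lie in $[-\pi/2,\pi/2]$), so $Q_0(h_\bk^{(0)}-z)^{-1}Q_0$ should be read throughout as the resolvent of the restriction to $\mathrm{Ran}\,Q_0$, which is exactly $R_0(z)$.
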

%
%
%
\begin{proof}
Note that due to Lemma \ref{lem3.0} we have for  $|z|\leq \pi/2$ 
\begin{align}\label{something}
\|R_0(z)\|\le \frac{2}{\pi}.
\end{align}
Using the first resolvent identity
\begin{equation}\label{eq:re-1}
\begin{split}
 W R_0(z) U = W R_0(i) U + (z-i) W R_0(i)^2 U + (z-i)^2 W R_0(i)R_0(z)R_0(i)  U .
\end{split}
\end{equation}
We shall estimate separately each term on the right hand side of \eqref{eq:re-1}.\\
Since $h_\bk^{(0)}$ commutes with the projections $P_\bm$ we have
\begin{equation}\label{eq:re-4}
 R_0(i) =    
 (h_\bk^{(0)} - i )^{-1}   
 -  (P_0  (h_\bk^{(0)} - i )P_0\!\!\upharpoonright_{{\rm Ran}P_0} )^{-1} .
\end{equation}
Thus, using \eqref{eq:int-1}, we obtain the estimate 
\begin{equation}\label{eq:re-added}
\begin{split}
\| W (P_0 (h_\bk^{(0)} -i)P_0\!\!\upharpoonright_{{\rm Ran}P_0})^{-1}U\| 
& \leq \beta \|\sqrt{|\chi_\alpha|} P_0\|\, \|(h_\bk^{(0)} -i)^{-1}\|\, 
\| \sqrt{P_0|\chi_\alpha|}\| \\ 
& \leq \beta \alpha^2.
\end{split}
\end{equation}
The identity \eqref{eq:re-4} together with the inequalities 
\eqref{eq:res-est-1} and \eqref{eq:re-added}, imply that there are
universal constants $c, C>0$, such that  for $|\alpha|<1/2$
\begin{equation}\label{eq:re-8}
\begin{split}
\| W R_0(i)U \| & \leq \| W (h_\bk^{(0)} -i)^{-1}U \| 
+ \| W (P_0 (h_\bk^{(0)} -i)P_0\!\!\upharpoonright_{{\rm Ran}P_0})^{-1}U\| \\
& \leq c  \beta\alpha  + \beta \alpha^2\leq C\beta\alpha.
\end{split}
\end{equation}
This bounds the first term on the right hand side of \eqref{eq:re-1}.
To estimate the second one we first notice that
\begin{equation*}
\begin{split}
  (h_\bk^{(0)} - i)^{-2}   =  
  (Q_0  (h_\bk^{(0)} - i )Q_0\!\!\upharpoonright_{{\rm Ran}Q_0})^{-2} 
  + (P_0  (h_\bk^{(0)} - i )P_0\!\!\upharpoonright_{{\rm Ran}P_0})^{-2}
\end{split}
\end{equation*}
Therefore, using the same strategy as in \eqref{eq:re-added}, there exists $C>0$ independent of $\alpha$ and $\beta$ such that 
\begin{equation}\label{eq:re-9}
\begin{split}
 \| (z-i) W R_0(i)^2 U \| & = \left\| (z-i) W \left( (h_\bk^{(0)} - i )^{-2} 
 - (P_0  (h_\bk^{(0)} - i )P_0\!\!\upharpoonright_{{\rm Ran}P_0})^{-2}\right) U\right\| \\
 & \leq \| (z-i) W  (h_\bk^{(0)} - i )^{-2} U\| + |z-i|\beta\alpha^2 \\
 & \leq \pi\beta \| \sqrt{|\chi_\alpha |}(h_\bk^{(0)} - i )^{-1} \| \, \|(h_\bk^{(0)} - i )^{-1} \sqrt{|\chi_\alpha|} \| + \pi\beta\alpha^2 \\
 & \leq C \beta \alpha,
\end{split}
\end{equation}
where we used \eqref{eq:res-est-2} in the last inequality.

 Finally, we bound the last term on the right hand side of
 \eqref{eq:re-1}. Observe that from Lemma~\ref{lem3.1} and inequality \eqref{eq:int-1},
we obtain that there exists $c>0$ such that for all $\alpha\le 1/2$
\begin{align*}
 \| \sqrt{|\chi_\alpha|} R_0(i) \| \leq \| \sqrt{|\chi_\alpha|} (h_\bk^{(0)}-i)^{-1}\|
 + \| \sqrt{|\chi_\alpha|} P_0 (P_0 (h_\bk - i)P_0)^{-1} \| \leq c 
\sqrt{\alpha}.
\end{align*}
Therefore, using \eqref{something} and \eqref{eq:res-est-2}
\begin{align*}
 \| (z-i)^2 W R_0(i)R_0(z) R_0(i) U \|  &\leq \beta |z-i|^2
 \| \sqrt{|\chi_\alpha|} R_0(i) \|\, \| R_0(z)\|\, \|R_0(i)
 \sqrt{|\chi_\alpha|}\|\\
 & \leq  C \beta |z-i|^2 \alpha. 
\end{align*}
In view of \eqref{eq:re-1}, the latter bound together with \eqref{eq:re-8} and
\eqref{eq:re-9}  concludes the proof.
\end{proof}
Before stating the next lemma we define the set
\begin{align*}
  \mathcal{S}:=\{z\in  \rho_{Q_0}(h_{\bf k}^{(0)})\,:\, \|WR_0(z)U\|<1\}.
\end{align*}
\begin{lemma}\label{resolventidentity}  
For any $z\in \mathcal{S}$ we have that
  $z\in \rho_{Q_0}(h_{\bf k}(\alpha,\beta))$ and 
  \begin{align*}
    R(z)=R_0(z)-R_0(z)U(1+WR_0(z)U)^{-1}W R_0(z).
  \end{align*}
\end{lemma}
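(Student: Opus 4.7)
The plan is to apply the standard Konno--Kuroda / second resolvent identity trick based on the symmetric factorization of the perturbation encoded in the operators $U$ and $W$. First I would observe that the definitions of $U$ and $W$ were rigged precisely so that $UW = \beta\,Q_0\chi_\alpha\sigma_3 Q_0$ as an operator on $\mathrm{Ran}\,Q_0$, which is nothing but the difference $Q_0\bigl(h_{\bf k}(\alpha,\beta) - h_{\bf k}^{(0)}\bigr)Q_0$. Consequently, for $z\in\rho_{Q_0}(h_{\bf k}^{(0)})$ one has the factorization on $\mathrm{Ran}\,Q_0$
\[
  Q_0\bigl(h_{\bf k}(\alpha,\beta) - z\bigr)Q_0 \;=\; R_0(z)^{-1} + UW \;=\; R_0(z)^{-1}\bigl(\id + R_0(z)UW\bigr),
\]
so invertibility of the left-hand side on $\mathrm{Ran}\,Q_0$ reduces to invertibility of $\id + R_0(z)UW$.

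The key algebraic identity, which I would verify by a one-line multiplication check using $\id + WR_0(z)U - (\id + WR_0(z)U) = 0$, is
\[
  \bigl(\id + R_0(z)UW\bigr)^{-1} \;=\; \id - R_0(z)U\bigl(\id + WR_0(z)U\bigr)^{-1}W,
\]
valid as soon as $\id + WR_0(z)U$ is invertible on $L^2(\Omega,\C^2)$. The hypothesis $z\in\mathcal{S}$, i.e.\ $\|WR_0(z)U\|<1$, precisely secures the latter through a Neumann series. Combining the two facts gives $z\in\rho_{Q_0}(h_{\bf k}(\alpha,\beta))$, and composing the formula above with $R_0(z)$ on the right produces exactly the claimed expression for $R(z)$.

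There is no serious obstacle here; the only point deserving care is the bookkeeping of domains, namely checking that $U:L^2(\Omega,\C^2)\to\mathrm{Ran}\,Q_0$ and $W:\mathrm{Ran}\,Q_0\to L^2(\Omega,\C^2)$ are bounded by construction, that $R_0(z)$ is bounded on $\mathrm{Ran}\,Q_0$ by the hypothesis on $z$, and that the $Q_0$ factors embedded in $U$ and $W$ reproduce $Q_0\,\beta\chi_\alpha\sigma_3\,Q_0$ on the nose, without stray correction terms. Once this is settled, the identity is a purely algebraic consequence of the Neumann series.
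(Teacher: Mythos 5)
Your proof is correct, but it follows a genuinely different route from the paper's. You set up the Konno--Kuroda factorization $Q_0(h_{\bf k}(\alpha,\beta)-z)Q_0=R_0(z)^{-1}\bigl(\id+R_0(z)UW\bigr)$ and then invert $\id+R_0(z)UW$ via the Woodbury-type commutation identity, so that invertibility follows in one stroke for every $z\in\mathcal{S}$, real or not. The paper instead avoids ever writing the unbounded factor $R_0(z)^{-1}$ explicitly: it first treats $z_\varepsilon=z+i\varepsilon$ with $\varepsilon>0$ (where $R(z_\varepsilon)$ exists a priori by self-adjointness), iterates the second resolvent identity to a finite order, shows the remainder $T_{N+1}$ vanishes as $N\to\infty$ using $\|WR_0(z_\varepsilon)U\|<1$, identifies $R(z_\varepsilon)$ with the Neumann-resummed expression $B(z_\varepsilon)$, and finally lets $\varepsilon\downarrow 0$, invoking the boundedness of $B(z_\varepsilon)$ to conclude that the real point $z$ lies in $\rho_{Q_0}(h_{\bf k}(\alpha,\beta))$. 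Your argument is shorter and does not require the regularization/limit step, at the price of having to track that $\id+R_0(z)UW$ maps $Q_0\mathcal{D}(h_{\bf k}^{(0)})$ bijectively to itself so that the composition with the unbounded $R_0(z)^{-1}$ is legitimate; you flag this point but leave the verification to the reader. The paper's route, though longer, stays entirely within bounded resolvents and only needs the standard fact that uniform boundedness of $(T-z-i\varepsilon)^{-1}$ as $\varepsilon\downarrow 0$ forces $z$ into the resolvent set. Both yield the same final identity for $R(z)$.
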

\begin{proof}
For $z\in \mathcal{S} $ we define for short 
\begin{align*}
  B(z):=R_0(z)-R_0(z)U(1+WR_0(z)U)^{-1}W R_0(z).
\end{align*}
Since $\|WR_0(z)U\|<1$ we may use Neumann series to get
\begin{equation}
  \label{eq:2}
\begin{split}
  B(z) &= R_0(z)-R_0(z)U\Big( \sum_{n\ge 0 }(-1)^n(WR_0(z)U)^n \Big)WR_0(z)\\
&=  R_0(z)\Big[ 1-\sum_{n\ge 0 }(-1)^nU (WR_0(z)U)^n WR_0(z) \Big]
=R_0(z) \sum_{n\ge 0 }(-1)^n(U W R_0(z))^n, 
\end{split}
\end{equation}
where the absolute convergence of the last expression is a consequence of the identity:
 $$(U W R_0(z))^n=U(W R_0(z)U)^{n-1}W R_0(z),\quad n\geq 1,$$
 and the boundedness of $U$ and $W$. 

  For $z\in \mathcal{S}\cap\R $ we define
  $z_\varepsilon:=z+i\varepsilon$. Since $\mathcal{S}$ is an open set we may choose  
  $\varepsilon>0$  so small that $z_\varepsilon\in \mathcal{S}$. We
  first prove that the claim holds for  $z_\varepsilon$.  A simple
  iteration of the second resolvent identity gives
  \begin{align}
    \label{eq:1}
 R(z_\varepsilon)=R_0(z_\varepsilon)\sum_{n\ge 0}^N (-1)^n \big(UWR_0(z_\varepsilon)\big)^n+T_{N+1},   
  \end{align}
where 
$$T_{N+1}=(-1)^{N+1}(UWR_0(z_\varepsilon))^N UW
R(z_\varepsilon)=(-1)^{N+1} U [WR_0(z_\varepsilon)U]^{N}WR(z_\varepsilon).$$  Hence $\Vert T_{N+1}\Vert $ converges to zero and  $R(z_\varepsilon)=B(z_\varepsilon)$.
Taking the limit $\varepsilon\to 0$ on the right hand side of this identity finishes the proof.
\end{proof}
\begin{proof}[Proof of Lemmas \ref{lem2.2} and \ref{lem2.3}]
Notice that the proof of Lemma~\ref{lem2.2} follows from
Lemma~\ref{resolventidentity} since  $z\in \mathcal{S}$ provided
$\alpha\beta$ is small enough (see Lemma~\ref{lem3.2}). 

In order to show Lemma~\ref{lem2.3} observe that 
\begin{align}
  \label{eq:3.23}
  \beta^2\|P_0\chi_\alpha R_0(z)\chi_\alpha P_0 \|\le \beta\|P_0
|\chi_\alpha|^{1/2}\| \|WR_0(z) U\| \||\chi_\alpha|^{1/2} P_0\|\le c\beta^2\alpha^3,
\end{align}
where we used Lemma \ref{lem3.2} and \eqref{eq:int-1}. Moreover,
assuming $\alpha\beta$ is so small that $ \|WR_0(z) U\|<1/2$ we have
\begin{align*}
   &\beta^2\|P_0 \chi_\alpha 
R_0(z)  U(1+WR_0(z)U)^{-1}W R_0(z)
\chi_\alpha P_0\|\\
&\le \beta \|P_0|\chi_\alpha|^{1/2}\|  \|WR_0(z) U\|
\| (1+WR_0(z)U)^{-1}\|  \|WR_0(z) U\|
\||\chi_\alpha|^{1/2} P_0\|\le c\beta^3\alpha^4.
\end{align*}
The latter inequality together with \eqref{eq:3.23} finishes the proof
of Lemma~\ref{lem2.3} in view of the resolvent identity of Lemma~\ref{resolventidentity}.
\end{proof}



\section{Proof of Theorem~\ref{thm:main2}}
In this section we remind that we fix $\alpha=1$ in the definition of $h_\bk(\alpha, \beta)$. Let us redenote the Feshbach operator $\fesh_{P_0}(z)$ by $\fesh_{\bk}(z)$ to emphasize its dependence on the vector $\bk\in\Omega$:
\begin{align*}
  \fesh_{\bk}(z):= P_0(h_\bk(1,\beta) - z)P_0 -  
  \beta^2 P_0\chi\sigma_3 Q_0 (Q_0 (h_\bk(1,\beta)- z)Q_0)^{-1} Q_0 \chi \sigma_3P_0 .
\end{align*}
We shall later on prove that $\fesh_{\bk}(0)$ is invertible for all $\bk\in \Omega$, with an inverse uniformly bounded in $\bk$. We now show that this information is enough for the existence of a gap near zero for the original operator $H(1,\beta)$. 
\begin{lemma}\label{lemma3.1}
Assume that there exists a constant $C>0$ such that
\begin{equation}\label{eq:est-b}
 \sup_{\bk \in\Omega} \| \fesh_\bk(0)^{-1} \| \leq C/\beta^3 .
\end{equation}
Then there exists a constant $\tilde{C}>0$ such that for all $|z| \leq \tilde{C}\beta^3$, we have $z\in\rho\left( H(1, \beta)\right)$.
\end{lemma}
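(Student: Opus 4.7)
The plan is to combine the Bloch--Floquet decomposition \eqref{spectra} with a perturbative use of the Feshbach inversion formula around $z=0$. By \eqref{spectra} it is enough to show that for $|z|\leq \tilde{C}\beta^3$ every fibre $h_\bk(1,\beta) - z$ is invertible with an inverse bounded uniformly in $\bk \in \Omega$. The Feshbach criterion will reduce this to two ingredients: (i) the invertibility of $Q_0(h_\bk(1,\beta) - z)Q_0$ on $\mathrm{Ran}\,Q_0$, uniformly in $\bk$; (ii) the invertibility of $\fesh_\bk(z)$ on $\mathrm{Ran}\,P_0$, uniformly in $\bk$.

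For (i), Lemma~\ref{lem3.0} yields $\|h_\bk^{(0)}Q_0 f\| \geq \pi\|f\|$ for every $f \in \mathrm{Ran}\,Q_0$, so $\|R_0(z)\|\leq 2/\pi$ whenever $|z|\leq \pi/2$. The mass perturbation restricted to $\mathrm{Ran}\,Q_0$ has norm bounded by $\beta\|\chi\|_\infty < \beta_0$, and, taking $\beta_0$ small enough, a Neumann series then provides $R(z) := (Q_0(h_\bk(1,\beta) - z)Q_0)^{-1}$ with $\|R(z)\| \leq 4/\pi$, uniformly in $\bk \in \Omega$ and in $|z|\leq \pi/2$.

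For (ii), I would use the factorisation
$$
\fesh_\bk(z) = \fesh_\bk(0)\bigl(I + \fesh_\bk(0)^{-1}\Delta_\bk(z)\bigr),\qquad \Delta_\bk(z) := \fesh_\bk(z) - \fesh_\bk(0),
$$
and control $\Delta_\bk(z)$ via the first resolvent identity $R(z) - R(0) = z R(z)R(0)$, which gives
$$
\Delta_\bk(z) = -zP_0 - \beta^2 z\, P_0\chi\sigma_3 Q_0 R(z)R(0) Q_0 \chi\sigma_3 P_0.
$$
Since $\|P_0\chi\sigma_3\| \leq \|\chi\|_2 = 1$ by \eqref{eq:chi}, and since $\|R(z)\|,\|R(0)\|$ are uniformly bounded by step (i), one obtains $\|\Delta_\bk(z)\| \leq C' |z|$ for some constant $C'$ independent of $\bk$ and of $\beta$ (as long as $\beta < \beta_0/\|\chi\|_\infty$). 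Combining this with the hypothesis \eqref{eq:est-b} yields
$$
\|\fesh_\bk(0)^{-1}\Delta_\bk(z)\| \leq \frac{CC' |z|}{\beta^3} \leq \tfrac{1}{2}
$$
as soon as $|z| \leq \tilde{C}\beta^3$ with $\tilde{C} := 1/(2CC')$. A Neumann series then shows $\fesh_\bk(z)$ invertible with $\|\fesh_\bk(z)^{-1}\| \leq 2C/\beta^3$ uniformly in $\bk$, and plugging this into the Feshbach inversion formula provides a bounded inverse of $h_\bk(1,\beta) - z$ uniformly in $\bk$, i.e.\ $z \in \rho(H(1,\beta))$.

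The main technical point is the uniform-in-$\bk$ control of $R(z)$ that enters the bound on $\Delta_\bk(z)$; this is what forces the smallness assumption $\beta\|\chi\|_\infty < \beta_0$, since otherwise the perturbation $\beta\chi\sigma_3$ on $\mathrm{Ran}\,Q_0$ could not be treated by a simple Neumann expansion around $Q_0 h_\bk^{(0)}Q_0$, and the linear-in-$z$ bound on $\Delta_\bk(z)$ would no longer beat the $\beta^{-3}$ growth of $\|\fesh_\bk(0)^{-1}\|$ coming from \eqref{eq:est-b}.
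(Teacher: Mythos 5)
Your proof is correct and follows essentially the same route as the paper's: bound $\fesh_\bk(z)-\fesh_\bk(0)$ linearly in $|z|$ uniformly in $\bk$, then combine this with the hypothesis on $\|\fesh_\bk(0)^{-1}\|$ to invert $\fesh_\bk(z)$ by a Neumann series for $|z|\lesssim\beta^3$, and conclude via the Feshbach criterion and \eqref{spectra}. You simply make explicit the steps the paper leaves implicit (the uniform bound on $(Q_0(h_\bk(1,\beta)-z)Q_0)^{-1}$ and the exact form of $\Delta_\bk(z)$ showing $C'$ is $\beta$-independent), which is a welcome elaboration rather than a different argument.
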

\begin{proof}
There exists $C'$ such that 
$$
 \sup_{\bk\in\Omega} \| \fesh_\bk(z) - \fesh_\bk(0)\| \leq C' |z|.
$$
Using \eqref{eq:est-b} yields
$$
 \sup_{\bk\in\Omega} \| (\fesh_\bk(z) - \fesh_\bk(0))\fesh_\bk(0)^{-1} \| \leq C' |z| 
 \frac{C}{\beta^3} .
$$
Hence, $\fesh_\bk(z) = \left[ 1 + (\fesh_\bk(z) - \fesh_\bk(0))\fesh_\bk(0)^{-1}\right]\fesh_\bk(0)$
is invertible for $|z|\leq \frac{\beta^3}{2 C C'}$ uniformly in $\bk\in\Omega$. This implies 
that $z\in\rho(H(1,\beta)$.
\end{proof}

Now we focus on proving the estimate \eqref{eq:est-b}. For that sake we consider two regimes in $\bk$.
\begin{lemma}\label{lemma3.2}
Let $\beta\in (0, \pi / (2 \|\chi\|_\infty) )$ and 
 let $\bk\in\Omega$ such that $|\bk| > 2\beta^2 / \pi^2$. Then $\fesh_{\bk}(0)$ is invertible and 
\begin{equation}\label{eq:bnd1}
 \sup_{\bk\in\Omega,\, |\bk| > 2\beta^2 / \pi^2}\| \fesh_{\bk}(0) ^{-1} \| \leq \frac{\pi}{2\beta^2}. 
\end{equation}
\end{lemma}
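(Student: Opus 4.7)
The plan is to exploit the fact that the hypothesis $\Phi=0$ simplifies the $P_0$-diagonal block to a purely kinetic term, so that $\fesh_\bk(0)$ can be factored as a leading kinetic part of norm $2\pi|\bk|$ perturbed by a Feshbach correction of size $\beta^2$; the assumption $|\bk|>2\beta^2/\pi^2$ is then exactly what is needed for a Neumann series argument. As a first step, repeating the computation from the proof of Lemma \ref{lem2.1} with $\alpha=1$ and $\Phi=0$, the mass contribution $\beta\langle\psi_0,\chi\psi_0\rangle\sigma_3 P_0$ disappears and one is left with $P_0 h_\bk(1,\beta) P_0 = -2\pi (\bsigma\cdot\bk)P_0$. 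Using $(\bsigma\cdot\bk)^2=|\bk|^2\id_{\C^2}$, this block is invertible on $\mathrm{Ran}\, P_0$ with inverse of norm exactly $(2\pi|\bk|)^{-1}$.

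Next I would control the inner resolvent $(Q_0 h_\bk(1,\beta)Q_0)^{-1}$ by a direct perturbation of Lemma \ref{lem3.0}: since $\|Q_0\chi\sigma_3 Q_0\|\leq \|\chi\|_\infty$, the hypothesis $\beta<\pi/(2\|\chi\|_\infty)$ yields
$$\|(Q_0 h_\bk(1,\beta)Q_0)^{-1}\|\leq \frac{1}{\pi-\beta\|\chi\|_\infty}<\frac{2}{\pi}.$$
For the Feshbach correction $\B_{P_0}(0)$ I would use that elements of $\mathrm{Ran}\, P_0$ are $\bx$-constant together with $\|\chi\|_2=1$, which gives $\|\chi\sigma_3 P_0\|=\|P_0\chi\sigma_3\|=1$, and hence $\|\B_{P_0}(0)\|\leq\beta^2\cdot(2/\pi)=2\beta^2/\pi$.

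To conclude, I would factor on $\mathrm{Ran}\, P_0$
$$\fesh_\bk(0) = (-2\pi\bsigma\cdot\bk)\bigl(P_0 - (-2\pi\bsigma\cdot\bk)^{-1}\B_{P_0}(0)\bigr),$$
and note that the perturbation inside the parentheses has norm bounded by $(2\pi|\bk|)^{-1}(2\beta^2/\pi)<1/2$ whenever $|\bk|>2\beta^2/\pi^2$. A Neumann series then produces $\fesh_\bk(0)^{-1}$ with norm at most $2\cdot(2\pi|\bk|)^{-1}=1/(\pi|\bk|)\leq \pi/(2\beta^2)$, uniformly in such $\bk$. The only delicate point is bookkeeping of constants: the threshold $|\bk|>2\beta^2/\pi^2$ is precisely tuned so that the Neumann series converges with a factor of two to spare, and this factor of two is exactly what upgrades the pointwise bound $1/(\pi|\bk|)$ to the claimed uniform bound $\pi/(2\beta^2)$. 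Everything else is routine once the $\Phi=0$ cancellation in the $P_0$-block is recognized.
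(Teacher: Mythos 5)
Your proof is correct and follows essentially the same route as the paper: you identify the $\Phi=0$ cancellation so that $P_0 h_\bk(1,\beta)P_0=-2\pi\bsigma\cdot\bk\,P_0$, bound the inner resolvent via Lemma~\ref{lem3.0} and the smallness of $\beta\|\chi\|_\infty$, control the Feshbach correction by $2\beta^2/\pi$ using $\|\chi\|_2=1$, and close with a Neumann series whose convergence is guaranteed exactly by $|\bk|>2\beta^2/\pi^2$. The only cosmetic difference is that you factor out $-2\pi\bsigma\cdot\bk$ on the left, while the paper multiplies by $(P_0\,2\pi\bsigma\cdot\bk\,P_0)^{-1}$ on the right; the estimates and the final bound $1/(\pi|\bk|)\leq\pi/(2\beta^2)$ are identical.
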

\begin{proof}
From Lemma~\ref{lem3.0} we have  
$\|h_\bk(1,0)  f \| \geq \pi \|f\|$ for all $f\in Q_0\mathcal{D}(h_\bk(1,0))$. 
Thus for $\beta\in (0, \pi / ( 2 \|\chi\|_\infty) )$: 
\begin{align*}
 \|Q_0 h_\bk(1,\beta)  f\| & = \| (h_\bk(1,0)) f + Q_0 \beta\chi\sigma_3  f\|
  \geq \pi \| f\| - \beta \|\chi\|_\infty \| f\| \geq \frac{\pi}{2} \|f\|.
\end{align*} 
Thus using $\|\chi\|_2 = 1$:
\begin{equation}\label{eq:fesh-1}
\begin{split}
 \| \fesh_\bk(0) + 2\pi \bsigma\cdot\bk  P_0\| 
& = \beta^2 \| \sigma_3 P_0 \chi Q_0 [Q_0 h_\bk(1,\beta) Q_0]^{-1} Q_0 \chi P_0 \sigma_3\| \\
& \leq \beta^2  \| [Q_0 h_\bk (1,\beta) Q_0]^{-1} \| \leq \frac{2}{\pi}\beta^2  .
\end{split}
\end{equation}
We have the identity
\begin{align}\label{hc2}
  (P_0 2\pi \bsigma\cdot\bk P_0)^{-1} = P_0 \frac{1}{2\pi \bk^2}\bsigma\cdot \bk ,
\end{align}
hence under our assumption on $\bk$ we get that the operator 
$1 - (\fesh_\bk(0) + 2\pi \bsigma\cdot\bk P_0)(P_0 2\pi \bsigma\cdot\bk P_0)^{-1}$ is invertible, and
\begin{align*}
  \| \fesh_\bk(0)^{-1}\| 
 = \left\| P_0 \frac{1}{2\pi \bk^2}\sigma\cdot \bk \Big( 1 - \big(\fesh_\bk(0) +2\pi \sigma\cdot\bk P_0\big)P_0 \frac{1}{2\pi \bk^2}\sigma\cdot \bk \Big)^{-1} \right\|
\leq \frac{\pi}{2\beta^2}
\end{align*}
which proves \eqref{eq:bnd1}.
\end{proof}

Now we focus on the case $|\bk| \leq 2\beta^2 /\pi^2$. Applying the resolvent formula to the operator $\fesh_\bk(0)$ yields
\begin{equation}\label{eq-est-fesh-2}
\begin{split}
 \fesh_\bk(0) 
 & =  -2\pi \bsigma\cdot\bk P_0 - \beta^2 \sigma_3 P_0 \chi Q_0 \left( Q_0 h_0(1,\beta) Q_0\right)^{-1} Q_0 \chi P_0 \sigma_3 \\
   & \ \ \ + \beta^2 \sigma_3 P_0 \chi Q_0 \left( Q_0 h_\bk(1,\beta) Q_0\right)^{-1} 
   2\pi \bsigma\cdot\bk  \left( Q_0 h_0(1,\beta) Q_0\right)^{-1}  Q_0 \chi P_0 \sigma_3 \\
   & =: M_\bk(\beta) + \mathcal{O}(\beta^2 |\bk| ) ,
\end{split}
\end{equation}
where we used  \eqref{eq:fesh-1} to prove that the third term on the right hand side of the first equality is $\mathcal{O}(\beta^2 |\bk| )$.
The operator $M_\bk(\beta)$ has the structure $P_0\otimes m_\bk(\beta)$ where its matrix part $m_\bk(\beta)$ is acting on $\C^2$.
\begin{lemma} The matrix $m_\bk(\beta)$ is traceless, i.e. 
$
\tr (m_\bk(\beta)) = 0.
$
In particular, defining the three dimensional vector
$$
 W_\bk(\beta) := \big(\tr(\sigma_1 m_\bk(\beta)),\, \tr(\sigma_2 m_\bk(\beta)),\,
 \tr(\sigma_3 m_\bk(\beta))   \big)\in\R^3,
$$
we have
\begin{equation}\label{eq:M_k-dec}
 m_\bk(\beta) = W_\bk(\beta)\cdot\bsigma.
\end{equation}
\end{lemma}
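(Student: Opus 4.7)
The plan is to decompose $m_\bk(\beta)$ according to the two nontrivial terms of $M_\bk(\beta)$ appearing in \eqref{eq-est-fesh-2}: the $\bk$-linear piece $-2\pi\bsigma\cdot\bk$ inherited from $P_0 h_\bk(1,0)P_0$, and the $\bk$-independent Schur--complement piece $m^{(2)}(\beta)$ built from $(Q_0 h_0(1,\beta)Q_0)^{-1}$. The first piece is traceless because $\sigma_1$ and $\sigma_2$ are, so the whole task reduces to proving $\tr(m^{(2)}(\beta))=0$.

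Next I would reduce this trace to a scalar quantity. Because $\psi_0\equiv 1$ is real and $\hat\chi(0)=\Phi=0$ one has $Q_0(\chi\psi_0)=\chi\psi_0=\chi$; writing the two outer $\sigma_3$'s as scalars on the eigenvectors $e_i$ of $\sigma_3$ with eigenvalues $\lambda_i\in\{\pm 1\}$, a direct calculation gives
\begin{align*}
 \tr(m^{(2)}(\beta))
 =-\beta^2\sum_{i=1}^2 \lambda_i^2\,\sps{u_i}{(Q_0 h_0(1,\beta)Q_0)^{-1}u_i}
 =-\beta^2\, S ,
\end{align*}
where $u_i:=\chi\otimes e_i\in\mathrm{Ran}\, Q_0$ and $S$ denotes the displayed sum.

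The heart of the argument is an antiunitary chiral symmetry. Let $\mathcal{K}$ denote complex conjugation and set $\Theta:=\sigma_1\mathcal{K}$. One verifies $\Theta^2=\id$, that $\Theta$ commutes with $Q_0$ (because $\psi_0$ is real), and that $\Theta i\Theta^{-1}=-i$, $\Theta\sigma_1\Theta^{-1}=\sigma_1$, $\Theta\sigma_2\Theta^{-1}=\sigma_2$, $\Theta\sigma_3\Theta^{-1}=-\sigma_3$. Combined with the reality of $\chi$, these identities give $\Theta h_0(1,\beta)\Theta^{-1}=-h_0(1,\beta)$, hence on $\mathrm{Ran}\,Q_0$,
\[
 \Theta\,(Q_0 h_0(1,\beta)Q_0)^{-1}\,\Theta^{-1}=-(Q_0 h_0(1,\beta)Q_0)^{-1}.
\]
Since $\Theta u_i=\chi\,\sigma_1 e_i=u_{i'}$ with $\{i,i'\}=\{1,2\}$, antiunitarity yields $S=-\overline{S}$; in particular $S\in i\R$. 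But $(Q_0 h_0(1,\beta)Q_0)^{-1}$ is self-adjoint, so each diagonal expectation $\sps{u_i}{(Q_0 h_0(1,\beta)Q_0)^{-1}u_i}$ is real, whence $S\in\R$. Therefore $S=0$ and $\tr(m^{(2)}(\beta))=0$.

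The main obstacle is spotting the correct chiral symmetry: one has to combine complex conjugation (which flips $i$) with a spin flip by $\sigma_1$ (which inverts $\sigma_3$ while fixing $\sigma_1,\sigma_2$), and notice that the reality of $\chi$ is precisely what makes the full operator $h_0(1,\beta)$ anticommute with $\Theta$. Once this symmetry has been identified, the ``purely imaginary $+$ real $\Rightarrow$ zero'' argument closes the proof in essentially one line.
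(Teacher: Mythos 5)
Your proof is correct and uses the same central idea as the paper: the anti-unitary chiral symmetry $\Theta=\sigma_1\mathcal{K}$ (the paper calls it $U_c$), which anticommutes with $h_0(1,\beta)$ and $\sigma_3$ while commuting with $P_0$, $Q_0$. The only difference is cosmetic -- you expand the trace of the Schur-complement piece into the scalar $S$ and argue ``$S$ is both real and purely imaginary, hence zero,'' whereas the paper argues directly at the operator level via $\tr(U_c A U_c)=\tr(A)$ for self-adjoint $A$; the mathematics is essentially identical.
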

\begin{proof}
Consider the anti-unitary charge conjugation operator $U_c$ defined by 
$
 U_c \psi = \sigma_1 \overline{\psi}.
$
Then a straightforward computation yields $U_c^2=1$ and
\begin{align*}
 U_cP_0=P_0U_c,\quad U_c\sigma_3=-\sigma_3U_c,\quad U_c h_0(1,\beta)  = - h_0(1,\beta) U_c . 
\end{align*}
Also:
\begin{align*}
 & U_c \left[\sigma_3 P_0 \chi Q_0 \left( Q_0 (h_0(1,\beta)Q_0\right)^{-1} 
 Q_0 \chi P_0 \sigma_3\right] U_c  = - \sigma_3 P_0 \chi Q_0 \left( Q_0 (h_0(1,\beta)Q_0\right)^{-1} 
 Q_0 \chi P_0 \sigma_3,
\end{align*}
Since $\tr (U_c A U_c) = \tr(A)$ if $A$ is self-adjoint, we must have $\tr (m_\bk(\beta)+2\pi \bsigma\cdot\bk) = 0$. Since $\tr ( \bsigma\cdot\bk) = 0$, we conclude that $
\tr (m_\bk(\beta)) = 0
$ and the lemma is proved.
\end{proof}

\begin{lemma}
Assume that hypothesis~\eqref{hyp-1} of Theorem~\ref{thm:main2} holds. Then there exists $\beta_0 >0$ and $C>0$  such that for all $\beta\in (0, \beta_0)$, and all $\bk\in\Omega$ such that $|\bk| \leq 2 \beta^2 /\pi^2$, $M_\bk(\beta)$ is invertible and 
\begin{equation}\label{eq:decadix}
  \| M_\bk(\beta)^{-1} \| \leq C/\beta^3 . 
\end{equation}  
\begin{proof}
Due to \eqref{eq:M_k-dec}, we have
$$
 M_\bk(\beta)^{-1} = \frac{1}{|W_\bk(\beta)|^2} M_\bk(\beta),
$$
hence
\begin{equation}\label{eq:est-trace-1}
\| M_\bk(\beta)^{-1} \| = \frac{1}{|W_\bk(\beta)|} \leq \frac{1}{|\tr ( \sigma_3 M_\bk(\beta))|} .
\end{equation}
Let us now compute (remember that the "third" component of $\bk$ is by definition equal to zero): 
\begin{equation}\label{eq:est-trace-2}
\begin{split}
\tr(\sigma_3 M_k(\beta)) 
& = \tr\left(\sigma_3 \left(-2\pi \bsigma\cdot\bk P_0 - \beta^2 \sigma_3 P_0 \chi Q_0 \left( Q_0 h_0(1,\beta) Q_0\right)^{-1} Q_0 \chi P_0 \sigma_3\right) \right) \\
& = -\beta^2 \tr \left( P_0 \chi Q_0 \left( Q_0 h_0(1,\beta) Q_0\right)^{-1} Q_0 \chi P_0 \sigma_3 \right) =: -\beta^2 w(\beta),
\end{split}
\end{equation}
which is independent of $\bk$. 
Using \eqref{eq:def-Pm} we get
\begin{align*}
 \big(Q_0 h_0(1,0) Q_0\big)^{-1} 
 & = \sum_{\bm\neq 0}\frac{1}{2\pi |\bm|^2} P_\bm \bsigma\cdot\bm =\sum_{\bm\neq 0} \frac{1}{2\pi |\bm|^2}P_\bm (m_1\sigma_1 +m_2\sigma_2).
\end{align*}
Hence we obtain $w(0)=0$ because 
\begin{equation}\label{eq:est-trace-3}
 \tr \left( P_0 \chi Q_0 \left( Q_0 h_0(1,0) Q_0\right)^{-1} Q_0 \chi P_0 \sigma_3 \right) = 0. 
\end{equation}
Moreover
\begin{equation}\label{eq:est-trace-4}
\begin{split}
 w'(0) & = \tr \left(P_0 \chi Q_0 \left( Q_0 h_0(1,0) Q_0 \right)^{-1} 
 \chi\sigma_3  \left( Q_0 h_0(1,0) Q_0 \right)^{-1} Q_0 \chi P_0 \sigma_3\right) \\
 & = \sum_{\bm\neq 0}\sum_{\bm'\neq 0}
 \frac{\overline{\hat\chi(\bm)}}{2\pi |\bm|^2} \hat\chi(\bm - \bm') 
 \frac{\hat\chi(\bm')}{2\pi |\bm'|^2} 
 \tr\big( (\bm\cdot\bsigma) \sigma_3 (\bm'\cdot\bsigma)\sigma_3 \big) \\
 & = - \sum_{\bm\neq 0}\sum_{\bm'\neq 0}
 \frac{\overline{\hat\chi(\bm)}}{2\pi |\bm|^2} \hat\chi(\bm - \bm') 
 \frac{\hat\chi(\bm')}{2\pi |\bm'|^2}  \bm\cdot\bm'
\end{split}
\end{equation}
With equations \eqref{eq:est-trace-2}-\eqref{eq:est-trace-4} we get
$$
\tr(\sigma_3 M_k(\beta)) = \frac{\beta^3}{4\pi^2}
\sum_{\bm\neq 0}\sum_{\bm'\neq 0} \frac{\bm\cdot\bm'}{|\bm|^2|\bm'|^2} \, \overline{\hat\chi(\bm)}\hat\chi(\bm') 
 \hat\chi(\bm-\bm')  +\mathcal{O}(\beta^4) ,
$$
which together with \eqref{eq:est-trace-1} concludes the proof of the lemma since we assumed hypothesis~\eqref{hyp-1}. 
\end{proof}
\end{lemma}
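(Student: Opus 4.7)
The plan is to exploit the Pauli decomposition $M_\bk(\beta)=P_0\otimes m_\bk(\beta)$ with $m_\bk(\beta)=W_\bk(\beta)\cdot\bsigma$ already established in \eqref{eq:M_k-dec}. Since $\sigma_i\sigma_j+\sigma_j\sigma_i=2\delta_{ij}\id$, one has $(W\cdot\bsigma)^2=|W|^2\id$, so the traceless Hermitian matrix $W_\bk(\beta)\cdot\bsigma$ has eigenvalues $\pm|W_\bk(\beta)|$, is invertible iff $W_\bk(\beta)\neq 0$, and then $\|m_\bk(\beta)^{-1}\|=1/|W_\bk(\beta)|$. Thus the full problem reduces to the uniform lower bound $|W_\bk(\beta)|\geq c\,\beta^3$ for $|\bk|\leq 2\beta^2/\pi^2$ and small $\beta$.

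I would use the coordinate bound $|W_\bk(\beta)|\geq|(W_\bk(\beta))_3|=\tfrac{1}{2}|\tr(\sigma_3 m_\bk(\beta))|$. The decisive observation is that $\bk$ lies in the $(1,2)$-plane, so $\sigma_3$ anticommutes with $\bsigma\cdot\bk$ and $\tr(\sigma_3\,\bsigma\cdot\bk\,P_0)=0$; the first summand in $M_\bk(\beta)$ drops out and
\begin{equation*}
\tr(\sigma_3 M_\bk(\beta))=-\beta^2\,w(\beta),\qquad w(\beta):=\tr\bigl(P_0\chi Q_0(Q_0 h_0(1,\beta)Q_0)^{-1}Q_0\chi P_0\sigma_3\bigr),
\end{equation*}
which is \emph{independent of $\bk$}. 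The $\bk$-uniformity is therefore automatic and the task becomes to show that $w(\beta)=\beta w'(0)+O(\beta^2)$ with $w'(0)\neq 0$.

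To extract the $\beta$-expansion of $w$, I would iterate the second resolvent identity for $Q_0 h_0(1,\beta)Q_0=Q_0 h_0(1,0)Q_0+\beta Q_0\chi\sigma_3 Q_0$. Lemma~\ref{lem3.0} gives $\|(Q_0 h_0(1,0)Q_0)^{-1}\|\leq 1/\pi$, while the perturbation has norm at most $\beta\|\chi\|_\infty$, so the Neumann series converges in operator norm for $\beta\|\chi\|_\infty<\pi$. Using the Fourier representation $(Q_0 h_0(1,0)Q_0)^{-1}=\sum_{\bm\neq 0}(2\pi|\bm|^2)^{-1}P_\bm(\bsigma\cdot\bm)$ and $\langle\psi_\bm,\chi\psi_{\bm'}\rangle=\hat\chi(\bm-\bm')$, the zeroth-order term of $w$ contains the Pauli trace $\tr((\bsigma\cdot\bm)\sigma_3)=0$, so $w(0)=0$. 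The first-order term produces, after $\sigma_3(\bsigma\cdot\bm')\sigma_3=-\bsigma\cdot\bm'$ and $\tr(\sigma_i\sigma_j)=2\delta_{ij}$, the identity $\tr\bigl((\bsigma\cdot\bm)\sigma_3(\bsigma\cdot\bm')\sigma_3\bigr)=-2\bm\cdot\bm'$, so
\begin{equation*}
w'(0)=-\frac{1}{2\pi^2}\sum_{\bm,\bm'\neq 0}\frac{\bm\cdot\bm'}{|\bm|^2|\bm'|^2}\,\overline{\hat\chi(\bm)}\,\hat\chi(\bm')\,\hat\chi(\bm-\bm').
\end{equation*}
Hypothesis~\eqref{hyp-1} is exactly the statement that this sum is nonzero, so $w'(0)\neq 0$.

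The main obstacle is the uniform quantitative control of the $O(\beta^2)$ tail of the Neumann expansion of $w$. Each further resolvent insertion multiplies by $\beta\|\chi\|_\infty/\pi$ in operator norm, while sandwiching the whole inner tail between $P_0\chi$ and $\chi P_0\sigma_3$ bounds its trace by $\|\chi\|_2^2\cdot O(\beta^2)=O(\beta^2)$ using $\|\chi\|_2=1$. This yields $w(\beta)=\beta w'(0)+O(\beta^2)$, hence $\tr(\sigma_3 M_\bk(\beta))=-\beta^3 w'(0)+O(\beta^4)$, and for $\beta<\beta_0$ universal we obtain $|\tr(\sigma_3 M_\bk(\beta))|\geq\tfrac{1}{2}|w'(0)|\beta^3$ uniformly in $\bk$. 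Combining with the first paragraph gives $\|M_\bk(\beta)^{-1}\|\leq C/\beta^3$, which is \eqref{eq:decadix}.
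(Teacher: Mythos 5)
Your proposal is correct and follows essentially the same route as the paper: the Pauli decomposition $m_\bk(\beta)=W_\bk(\beta)\cdot\bsigma$, the bound of $\|m_\bk(\beta)^{-1}\|$ by $1/|\tr(\sigma_3 m_\bk(\beta))|$, the observation that the $\bsigma\cdot\bk$ term drops out of the $\sigma_3$-trace so the estimate is $\bk$-independent, and the Taylor/Neumann expansion of $w(\beta)$ around $\beta=0$ giving $w(0)=0$ and a nonzero $w'(0)$ equal (up to a constant) to the sum in hypothesis~\eqref{hyp-1}. The only differences are cosmetic: you track the remainder of the Neumann series a bit more explicitly, and you carry the factor $\tr(\sigma_i\sigma_j)=2\delta_{ij}$ consistently (giving $|W_\bk(\beta)|\geq\tfrac12|\tr(\sigma_3 m_\bk(\beta))|$ and a $-2\bm\cdot\bm'$ from the Pauli trace), whereas the paper's displayed formulas appear to absorb or drop a factor of $2$ in a couple of places — this affects only the unspecified constant $C$ and not the validity of the argument.
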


\begin{proof}[Proof of Theorem~\ref{thm:main2}]
Equation \eqref{eq-est-fesh-2} together with \eqref{eq:decadix} implies that for $\beta\in (0, \beta_0)$ and $|\bk|\leq 2\beta^2/\pi^2$, the operator $\fesh_\bk(0)$ is invertible and 
$$
 \sup_{\bk\in\Omega,\, |\bk| \leq 2\beta^2 / \pi^2}\| \fesh_\bk(0)^{-1}  \|\leq C/\beta^3
$$ 
for some constant $C>0$ independent of $\beta$. Using in addition the estimate \eqref{eq:bnd1} of  Lemma~\ref{lemma3.2} it implies that there exists a constant $C>0$ independent of $\beta$ such that
$$
 \sup_{\bk\in\Omega} \| \fesh_\bk(0)^{-1}  \| \leq C/\beta^3.
$$
Together with Lemma~\ref{lemma3.1}, this concludes the proof of Theorem~\ref{thm:main2}.
\end{proof}


\appendix
\section{Estimate for the  resolvent kernel}
\begin{lemma}\label{resolvent-bound}
There exists a constant $C>0$ such that the following kernel estimate holds for all $\bx\neq\bx' \in\R^2$
 $$
  | (H_0\pm i)^{-1} (\bx, \bx')  | \leq C \frac{e^{-|\bx - \bx'|}}{|\bx-\bx'|}.
 $$
\end{lemma}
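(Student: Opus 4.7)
The plan is to reduce the computation to the well-known kernel of the two-dimensional Helmholtz resolvent $(-\Delta + 1)^{-1}$. Using the anticommutation relations $\sigma_j \sigma_k + \sigma_k \sigma_j = 2\delta_{jk}$, one checks that $H_0^2 = -\Delta \otimes \id_{\C^2}$, hence
\[
(H_0 \mp i)(H_0 \pm i) = -\Delta + 1
\]
on the Schwartz class, which by density gives the operator identity $(H_0 \pm i)^{-1} = (H_0 \mp i)(-\Delta + 1)^{-1}$.

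Next I would recall that the integral kernel of $(-\Delta + 1)^{-1}$ on $\R^2$ is $G(\bx,\bx') = \tfrac{1}{2\pi} K_0(|\bx - \bx'|)$, where $K_0$ is the modified Bessel function of the second kind of order $0$. Applying the first-order differential operator $H_0 \mp i$ in the $\bx$ variable and using $K_0'(r) = - K_1(r)$ yields
\[
(H_0 \pm i)^{-1}(\bx,\bx') = \frac{1}{2\pi}\Bigl[\,\bsigma \cdot \frac{\bx - \bx'}{|\bx-\bx'|}\, K_1(|\bx-\bx'|) \mp i\, K_0(|\bx-\bx'|)\, \id_{\C^2}\Bigr],
\]
valid away from the diagonal.

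It then remains to estimate the two Bessel functions. For $r > 0$ one has the standard bounds
\[
|K_0(r)| \le C(1 + |\log r|)\, e^{-r}, \qquad |K_1(r)| \le \frac{C}{r}\, e^{-r},
\]
with exponential decay coming from the asymptotic behavior as $r \to \infty$ and the singular behaviors $K_0(r) \sim -\log r$, $K_1(r) \sim 1/r$ as $r \to 0^+$. Since $1 + |\log r| \le C'/r$ for $r \in (0,1]$ while $1 + |\log r| \le C' r$ for $r \ge 1$ (which is absorbed in the exponential), both contributions are bounded by $C e^{-r}/r$, giving the desired kernel estimate.

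The only delicate point is making the distributional identity $(H_0 \pm i)^{-1} = (H_0 \mp i)(-\Delta + 1)^{-1}$ rigorous at the level of kernels, since $H_0$ is a first-order differential operator and $G$ is only locally integrable; this is handled by computing on the Fourier side, or equivalently by noting that $K_0$ is smooth off the diagonal so the composition can be evaluated classically there, which suffices for a pointwise kernel bound away from $\bx = \bx'$. No further difficulty arises.
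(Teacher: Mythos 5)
Your argument is essentially identical to the paper's proof: the same factorization $(H_0\mp i)(H_0\pm i)=-\Delta+1$, the same kernel $\tfrac{1}{2\pi}K_0(|\bx-\bx'|)$ for $(-\Delta+1)^{-1}$ (which the paper derives via the heat-kernel representation), the identity $K_0'=-K_1$, and the asymptotics of $K_0,K_1$ at zero and infinity. The only imprecision --- shared with the paper, and harmless since later only a bound of the form $e^{-c|\bx-\bx'|}/|\bx-\bx'|$ with some $c>0$ is used --- is at large $r=|\bx-\bx'|$, where $K_0(r)\sim\sqrt{\pi/(2r)}\,e^{-r}$ actually exceeds $e^{-r}/r$, so the logarithmic factor cannot be ``absorbed in the exponential'' without slightly weakening the decay rate.
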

\begin{proof} 
  The relation
  $(H_0+i)(H_0-i) = (-\Delta + 1)$ implies that
  \begin{align}\label{ec1}
     (H_0- i)^{-1}= (H_0+i) (-\Delta + 1)^{-1}.
  \end{align}
In order to obtain the kernel for the Laplacian we recall the
well-known formula for its heat kernel in dimension $n\ge1$
\begin{align*}
  e^{t\Delta}(\bx,\bx')=\frac{1}{(4\pi t)^{n/2}} e^{-\tfrac{|\bx-\bx'|^2}{4t}}.
\end{align*}
Thus, by the usual integral representation of the resolvent in terms
of the heat kernel we get
\begin{align*}
   (-\Delta + 1)^{-1}(\bx,\bx')=\frac{1}{4\pi}\int_0^\infty
  e^{-\tfrac{|\bx-\bx'|^2}{4t} } e^{-t} \frac{dt}{t}= \frac{1}{2\pi} K_0(|\bx-\bx'|),
\end{align*}
where $K_0$ is a modified Bessel function (see Formula
8.432(6) in \cite{GR}). Using this in \eqref{ec1} and that $K_0'(z)=-K_1(z)$ (see
Formula 8.486(18) in \cite{GR}) we get that
\begin{align*}
   (H_0\pm i)^{-1}(\bx,\bx')=\frac{i}{2\pi} \big(K_1(|\bx-\bx'|){\boldsymbol \sigma}
  \cdot \nabla_{\bx} |\bx-\bx'|\mp K_0(|\bx-\bx'|)\big).
\end{align*}
Thus 
$\left\vert (H_0\pm i)^{-1}(\bx,\bx')\right\vert\le
\frac{1}{2\pi}\big(\left \vert K_0(|\bx-\bx'|)\right \vert +\left \vert K_1(|\bx-\bx'|)\right \vert\big)$
the claim now follows by the asymptotic behaviour of the Bessel
functions $K_0$ and $K_1$ at zero and infinity (see Formulas 8.447(3), 8.446 and 8.451(6) in \cite{GR}).
\end{proof}
\bigskip

\noindent
{\cb {\bf Acknowledgments.}
It is a pleasure to thank the REB program of CIRM for giving us the
opportunity to start this research.  Furthermore, we thank the
Pontificia Universidad Cat\'olica de Chile, Aalborg Universitet and
Universit\'e de Toulon for their hospitality.
E.S has been partially funded by Fondecyt (Chile)
project \# 114--1008 and  Iniciativa Cient\'ifica Milenio (Chile) through the
Nucleus RC–120002. }

\bibliographystyle{plain}


\begin{thebibliography}{10}

\bibitem{Benguria2017}
R.~D. Benguria, S.~Fournais, E.~Stockmeyer, and H.~Van Den~Bosch.
\newblock Self-adjointness of two-dimensional {D}irac operators on domains.
\newblock {\em Ann. Henri Poincar{\'e}}, 18(4):1371--1383, 2017.

\bibitem{Berry}
M.~V. Berry and R.~J. Mondragon.
\newblock Neutrino billiards: time-reversal symmetry-breaking without magnetic
  fields.
\newblock {\em Proc. Roy. Soc. London Ser. A}, 412(1842):53--74, 1987.

\bibitem{BPP}
S.~J. Brun, V.~M. Pereira, and T.~G. Pedersen.
\newblock Boron and nitrogen doping in graphene antidot lattices.
\newblock {\em Phys. Rev. B}, 93(24):245420, 2016.

\bibitem{brun2014electronic}
S.~J. Brun, M.~R. Thomsen, and T.~G. Pedersen.
\newblock Electronic and optical properties of graphene antidot lattices:
  comparison of {D}irac and tight-binding models.
\newblock {\em J. of Phys.: Cond. Matt.}, 26(26):265301, 2014.

\bibitem{castro2009electronic}
A.H. Castro~Neto, F.~Guinea, N.M.R. Peres, K.S. Novoselov, and A.K. Geim.
\newblock The electronic properties of graphene.
\newblock {\em Rev. of Mod. Phys.}, 81:109--162, 2009.

\bibitem{dvorak2013bandgap}
M.~Dvorak, W.~Oswald, and Z.~Wu.
\newblock Bandgap opening by patterning graphene.
\newblock {\em Scientific Reports}, 3:2289, 2013.

\bibitem{FPFMBPJ}
J.~A. F{\"u}rst, J.~G. Pedersen, C.~Flindt, N.~A. Mortensen, M.~Brandbyge,
  T.~G. Pedersen, and A.-P. Jauho.
\newblock Electronic properties of graphene antidot lattices.
\newblock {\em New J. Phys.}, 11(9):095020, 2009.

\bibitem{GR}
I.S. Gradshteyn and I.M. Ryzhik.
\newblock {\em Table of Integrals, Series, and Products}.
\newblock Academic Press, 7th edition, 2007.

\bibitem{science}
B.~Hunt, J.D. Sanchez-Yamagishi, A.F. Young, M.~Yankowitz, B.~J. LeRoy,
  K.~Watanabe, T.~Taniguchi, P.~Moon, M.~Koshino, P.~Jarillo-Herrero, et~al.
\newblock Massive {D}irac fermions and {H}ofstadter butterfly in a {V}an der
  {W}aals heterostructure.
\newblock {\em Science}, 340(6139):1427--1430, 2013.

\bibitem{N}
G.~Nenciu.
\newblock Dynamics of band electrons in electric and magnetic fields: rigorous
  justification of the effective {H}amiltonians.
\newblock {\em Rev. of Mod. Phys.}, 63(1):91, 1991.

\bibitem{PGMP}
J.~G. Pedersen, T.~Gunst, T.~Markussen, and T.~G. Pedersen.
\newblock Graphene antidot lattice waveguides.
\newblock {\em Phys. Rev. B}, 86(24):245410, 2012.

\bibitem{PP}
J.~G. Pedersen and T.~G. Pedersen.
\newblock Band gaps in graphene via periodic electrostatic gating.
\newblock {\em Phys. Rev. B}, 85(23):235432, 2012.

\bibitem{pedersen2008graphene}
T.~G. Pedersen, C.~Flindt, J.~Pedersen, N.~A. Mortensen, A.-P. Jauho, and
  K.~Pedersen.
\newblock Graphene antidot lattices: designed defects and spin qubits.
\newblock {\em Phys. Rev. Lett.}, 100(13):136804, 2008.

\bibitem{RS}
M.~Reed and B.~Simon.
\newblock {\em Methods of Modern Mathematical Physics}, volume 4: Analysis of
  Operators.
\newblock Academic Press, New York, 1 edition, 1978.

\bibitem{song2013electron}
J.~C. Song, A.~V. Shytov, and L.~S. Levitov.
\newblock Electron interactions and gap opening in graphene superlattices.
\newblock {\em Phys. Rev. Lett.}, 111(26):266801, 2013.

\bibitem{T}
B.~Thaller.
\newblock {\em The {D}irac equation}.
\newblock Texts and Monographs in Physics. Springer-Verlag, Berlin, 1992.

\end{thebibliography}
\end{document}